\documentclass[10pt]{article}

\addtolength{\textwidth}{2cm}
\addtolength{\hoffset}{-1cm}

\usepackage{graphicx}
\usepackage{amsmath}
\usepackage{amsfonts}
\usepackage{amsthm}

\newcommand{\dd}{\partial}
\newcommand{\R}{\mathbb{R}}

\renewcommand{\Re}{{\mathcal R e}}
\newcommand{\be}{\begin{equation}}
\newcommand{\ee}{\end{equation}}

\newtheorem{theorem}{Theorem}[section]
\newtheorem{corollary}[theorem]{Corollary}
\newtheorem{proposition}[theorem]{Proposition}
\newtheorem{lemma}[theorem]{Lemma}

\newtheorem{rmk}{Remark}[section]

\DeclareMathOperator{\atan}{atan}

\begin{document}

\begin{center}
  {\Large Critical points of  Strichartz functional} 
\end{center}

\medskip

\centerline{\scshape C. Eugene Wayne
}
\medskip
{\footnotesize
 \centerline{Department of Mathematics, Boston University}
 \centerline{Boston, MA 02215, USA}
}

\medskip

\centerline{\scshape Vadim Zharnitsky
}
\medskip
{\footnotesize
 \centerline{Department of Mathematics, University of Illinois at Urbana-Champaign} 
 \centerline{1409 W. Green Street, Urbana, Illinois 61801-2975, USA}
}

\bigskip

\begin{abstract}

We study a pair of infinite dimensional dynamical systems naturally associated with the study of minimizing/maximizing functions for the Strichartz inequalities for the Schr\"odinger equation. One system is of gradient type and the other one is a Hamiltonian system. For both systems, the corresponding sets of critical points, their stability, and the relation between the two are investigated. By a combination of numerical and analytical methods we argue that the Gaussian is a maximizer in a class of Strichartz inequalities for dimensions one, two and three. The argument reduces to verification of an apparently new combinatorial inequality involving binomial coefficients.

\end{abstract}

\section{Introduction}

Recently there has been considerable interest in the existence and properties of maximizers/minimizers for the Strichartz inequalities.  These are functions which give the best possible constant in these equalities.  One line of research began with Kunze \cite{kunze} who proved the existence of such a function for the one-dimensional Strichartz inequality for solutions of the Schr\"odinger equation.  Later Foschi \cite{foschi} found the value of the best constants in one and two dimensions as well as determining that the maximizing function was given by a Gaussian in both cases.  Foschi's proof was then simplified by Hundertmark and Zharnitsky  \cite{hdvz} who related the maximizing property to orthogonal projections for the space-time norm used to define the Strichartz inequality.

In this paper we propose an alternative approach to study such maximizers based on properties of gradient flows.  We show that the maximizing function is a critical point for a gradient flow in $L^2(\R^n)$.  Because gradient flows are well studied, and because all their orbits must approach a critical point we hope that dynamical systems methods can be used to better understand the properties of these maximizing functions.  To the best of our knowledge, this point of view has not yet been exploited in the search for best constants for various inequalities, and in principle, it should be of use, not just for Strichartz inequalities, on which we focus in this paper, but for other families of inequalities as well.  One line of work that does seem somewhat in the same vein as our own are the papers of Carlen, Carrillo and Loss \cite{carlen} and Bonforte, et al \cite{bonforte}, who relate optimal constants in Hardy-Littlewood-Sobolev and Hardy-Poincar\'e inequalities to solutions of fast diffusion equations.

In addition to the gradient flow we introduce in Section \ref{sec:grad_flow}, the Strichartz inequalities are also naturally related to an infinite dimensional Hamiltonian system.  This has recently been derived in a different context by Faou et al \cite{faou} who showed it arises as a large box limit of a resonant normal form for the NLS equation.  In other recent work, Albert and Kahalil \cite{AK2017} have studied the well-posedness of the Strichartz Hamiltonian flow in one dimension and constructed an example of ill-posedness.  Our work is also related to recent studies of extremizers in the context of Fourier restriction inequalities, see e.g. \cite{Carneiro, ChristShao, ChristQuilodran,FoschiSilva}  and references therein.

In Section \ref{sec:grad_flow}, we will explore the relationship between the Strichartz gradient flow and the Strichartz Hamiltonian flow and investigate in particular, how the latter can shed light on the stability of critical points for the gradient flow.

As an example, we first consider  critical points of the quantum mechanical harmonic oscillator (QMHO). In that system, everything can be explicitly
 calculated and it  will be interesting to 
compare the stability of critical points in the Strichartz functional with those of the QMHO.

  \section{The Hessian for the Quantum-mechanical Harmonic Oscillator Hamiltonian}\label{s:QM}

\medskip

In this section we consider a very simple, explicitly computable example to illustrate our
approach of relating gradient flows and best constants in inequalities.

Consider the quantum-mechanical harmonic oscillator eigenvalue problem:
\begin{equation}
-f_m'' + x^2 f_m = \lambda_m f_m = (2m+1) f_m\ .
\end{equation}

The variational principle for eigenvalues implies that 
\begin{equation}
\frac{H[f]}{\| f \|_{L^2}} = \frac{  \int (x^2 |f|^2 + |f_x|^2) dx}{\| f\|_{L^2}} \ge \lambda_{min} 
\end{equation}
which we can rewrite as
\begin{equation} \label{eq:inequality}
\| f \|_{L^2} \le \frac{1}{\lambda_{min}} H[f]\ .
\end{equation}
From our knowledge of the eigenvalues of the harmonic oscillator, the
``best value'' of the constant on the right hand side of this inequality is ``$1$'',  and
the function that saturates the inequality is the Gaussian.  We now 
illustrate how we could obtain that result from a point of view similar to that we 
will use in the the rest of the paper to study the Strichartz inequality.


We begin by defining a functional 
\begin{equation} \label{e:functional}
Q[f] = \frac{H[f]}{\int f^2 dx}\ .
\end{equation}
related to this inequality, and study the  gradient flow associated with $Q$.

To actually study this flow it is convenient to expand with respect to the  Hermite functions, $\{ f_n(x)\}$ which form
a basis for $L^2(\R)$.  We will write $f_n(x) = c_n H_n(x) e^{-x^2/2}$, where $H_n$ are the Hermite polynomials
and the normalization constants $c_n$ are chosen so that
\begin{equation}
\int f_n(x) f_m(x) dx = \delta_{n,m}\ .
\end{equation}

If we expand an $L^2$ function as
$$
f = \sum_{m=0} \alpha_m f_m\ ,
$$
then we obtain an expression for $Q$ in terms of $\alpha$ of the form:
\begin{equation}
Q[\alpha] = \frac{ \sum_{m=0} \lambda_m \alpha_m^2}{\sum_{m} \alpha_m^2}\ .
\end{equation}

\begin{rmk} For convenience, in this section we will consider only real valued functions, so
we can assume that the coefficient $\alpha_n$ are real numbers.  It would be straightforward to
extend the following discussion to complex coefficients.
\end{rmk}

Consider the associated gradient flow with
\begin{eqnarray} \label{eq:gradient} \nonumber
\dot{\alpha}_k &=& - \partial_{\alpha_k} Q[\alpha] = \frac{-2 \lambda_k \alpha_k}{\sum_{m} \alpha_m^2}
- \frac{2 \alpha_k  \sum_{m=0} \lambda_m \alpha_m^2}{(\sum_{m} \alpha_m^2)^2}\\ \nonumber
&=& \frac{-2 \lambda_k \alpha_k - 2 \alpha_k Q[\alpha]}{\sum_{m} \alpha_m^2} \\
&=& \frac{-2(\lambda_k - Q[\alpha])}{(\sum_{m} \alpha_m^2)} \alpha_k\ .
\end{eqnarray}

From this formula we can immediately make a number of observations:
\begin{enumerate}
\item For any $n$, the sequence $\alpha_k = \delta_{k,n}$ is a critical point of this flow - i.e. all the 
Hermite functions are critical points.
\item In fact, in this case, we can prove that
these are the only critical points.  Suppose there was a critical point which was not
equal to a Hermite function.  Then its expansion in the Hermite basis would have at least two nonzero
$\alpha_k$'s - say $\alpha_{n_1}$ and $\alpha_{n_2}$.  But then, since $\dot{\alpha}_{n_1}
= \dot{\alpha}_{n_2} =0$ (since we are at a critical point) and hence
$$
Q[\alpha] = \lambda_{n_1}\ , {\mathrm{and}} \ Q[\alpha] = \lambda_{n_2}\ ,
$$
a contradiction, since $n_1 \ne n_2$.  
\item  We can give even more detailed information about the gradient
flow in this instance.   Note that for any non-negative integer $n*$,  the finite
dimensional subspace of $L^2$:
\begin{equation}
S_{n*} = \{ \alpha ~|~ \alpha_k = 0\ , k > n*\}\ ,
\end{equation}
is invariant for the equations of motion \eqref{eq:gradient}.
\item \label{i:unstable} Given initial date $\alpha^0$ for \eqref{eq:gradient}, define $N(\alpha^0) =
\sup\{ n ~|~ \lambda_n \le Q[\alpha^0] \}$.  Then from the equations of motion we see that 
for any $k \le N(\alpha^0)$, $\alpha_k$ is an increasing function of time (or at least, non-decreasing) while
for any $k > n(\alpha^0)$, $\alpha_k$ is a decreasing function of time.  Thus the omega-limit
set for this trajectory lies in the invariant subspace $S_{N(\alpha^0)}$.  Furthermore, since
this is a gradient flow, (and in this case, the very simple form of the equations of motion
allow one to prove that the trajectories lie in compact sets)
the omega-limit set must be a fixed point, so the only possibilities for
the omega-limit set are the critical points $\{ f_0, f_1, \dots f_{N(\alpha^0)}\}$.
Thus, we see that the gradient flow associated to the functional associated to our original
inequality \eqref{eq:inequality} almost always tends toward the function that
yields the best constant in the inequality.  Only if the initial condition happens to
lie in the (finite dimensional) stable manifold of one of the other critical points of the flow
will we fail to reach the optimizing function.
\end{enumerate}


We can also use this functional framework to examine the stability of the critical
points located above.  It is simpler to examine the stability on submanifolds of 
functions of norm one, and we will prove in our discussion of the analogous
computation for the Strichartz inequality below that this is equivalent
to considering the unrestricted variations,  aside from zero eigenvalues
associated with simple invariances of functional.  

Since the Hermite
functions, $f_m$,  corresponding to our critical points are normalized, we 
have
\begin{equation}
H[f_m] = (2m+1)\ .
\end{equation}
Furthermore, on the submanifold of functions of norm one, the denominator of
our functional is always equal to one and we can just look at variations in the numerator.

We now evaluate the Hessian at $f_m$ by inserting the trial function
\begin{equation}
f= \sqrt{1-s_1^2-s_2^2} f_m + s_1 f_k + s_2 f_{\ell}\ .
\end{equation}
Note that this trial function is constructed to insure that it has norm one.

First consider the off-diagonal elements. We find:
\begin{eqnarray}
\frac{\partial^2 H}{\partial s_1 \partial s_2} |_{s_1=s_2=0} &=& 2 \int \left( s^2 f_k f_{\ell} + f_k' f_{\ell}' \right) dx  \nonumber \\
&=& 2 \int \left( x^2 f_k - f_k'' \right) f_{\ell} dx \\
&=& -2 \lambda_k \int f_k f_{\ell} dx = 0\ , \nonumber
\end{eqnarray}
by orthonormality.

Now consider the diagonal terms:
\begin{eqnarray}
\frac{\partial^2 H}{\partial s_1^2} |_{s_1=s_2=0} &=& 2 \left\{ \int \left( (f_k')^2 + x^2 f_k^2 \right) dx
-  \int \left( (f_m')^2 + x^2 f_m^2 \right) dx \right\}  \nonumber \\
&=& 2 \left( (2k+1) - (2m+1) \right) \\
&=& 4(k-m)\ . \nonumber
\end{eqnarray}

Thus, in particular, if we consider the Hessian at the Gaussian, $h_0$, we have all eigenvalues positive, which means that $h_0$ is at least a local minimum, and is consistent with the fact that we know the
Gaussian corresponds to the function giving the smallest possible value of the function $Q[f]$.

The Hessian at the first Hermite function has a single negative eigenvalues meaning that the 
gradient flow has a one dimensional unstable manifold and all other directions are stable.  In addition, 
the discussion in point \ref{i:unstable} above, implies that solutions in the unstable manifold of 
$h_1$ will tend, under the gradient flow, toward the minimum at $h_0$.

One can continue in this fashion to analyze the stability and instability of successive critical points
leading to a more-or-less complete picture of the geometry of the gradient flow in this instance.

\section{Gradient and Hamiltonian  flows of Strichartz functional in one dimension}
\label{sec:grad_flow}

The Strichartz inequality for linear Schr\" odinger equation in one dimension  is given by\footnote{All integrals are evaluated over the real line, unless stated otherwise. }
\begin{equation}
\int\int |e^{i t \partial_x^2} f |^6 dx dt \leq C ||f||^6_{L^2}.
\end{equation}
It is natural to consider the ratio whose supremum gives the best constant in this inequality. 
Mimicking the construction in the previous section,
we will also associate the left hand-side of the inequality with the Hamiltonian functional
\begin{equation}
H[f] = \int\int |e^{i t \partial_x^2} f |^6 dx dt.
\end{equation}
Then the ratio giving the best constant in the Strichartz inequality can be written as 
\begin{equation}\label{e:Strichartz_defn}
S[f] = \frac{H[f]}{||f||^6_{L^2}} = \frac{\int\int |e^{i t \partial_x^2} f |^6 dx dt}{(\int |f|^2 dx)^3}\ .
\end{equation}


As in the previous section, our first  goal is to study the associated gradient flow
\begin{equation}\label{eq:gradient_2}
\dot{f} = - \nabla S[f].
\end{equation}

In this case, this gives rise to a complicated, infinite dimensional dynamical system.  Unlike
in the previous section we cannot conclude that all solutions are precompact, and so we don't know
that all initial conditions even have an omega-limit set, let alone that they will all
approach a fixed point for the flow, as is the case for the omega-limit set of solutions of
finite dimensional gradient flows.  However, we feel that searching for critical points of this
flow can still give insight into the likely candidates for the functions yielding best constants in
this type of inequalities. 
Since 
the function which gives the best constant is obviously a fixed point, one way to search for the best
constant would be look at the limit points of solutions of \eqref{eq:gradient_2}.  Of course, this 
strategy could fail if $S[f]$ has local minima other than the global minimum.  So our first
goal will be to identify critical points of \eqref{eq:gradient_2} and analyze their stability.

 As in the previous section we find it easiest to study this gradient flow
by expanding $f$ with respect to the basis of Hermite functions.
\begin{eqnarray}
f(x) = \sum_{n=0}^{\infty} \alpha_n f_n(x)\ .
\end{eqnarray}
note that because of the normalization, the denominator of the Strichartz functional has the very simple form
\begin{equation}
\left (\int |f|^2 dx \right )^3 = \left ( \sum_{n=0}^{\infty} | \alpha_n |^2  \right )^3.
\end{equation}

The other important point is that the evolution of $f$ under the free Schr\" odinger evolution is extremely simple in this basis, namely 
\begin{equation}
e^{i t \partial_x^2} f(x) = \sum_{n=1}^{\infty} \alpha_n \frac{c_n}{\sqrt{1+2it}} \left( \frac{1-2it}{1+2it} \right)^{n/2} 
H_n \left ( \frac{x}{\sqrt{1+4 t^2}} \right ) \exp \left (-\frac{x^2/2}{1+ 2it} \right ).
\end{equation}

Inserting this into the numerator of the Strichartz functional we find

\begin{eqnarray}
 \int\int |e^{i t \partial_x^2} f |^6 dx dt = \sum_{ n_1 \dots n_6} c_{n_1} c_{n_2} c_{n_2} c_{n_4} c_{n_5} c_{n_6} 
\alpha_{n_1}  \alpha_{n_2} \alpha_{n_2} \overline{\alpha_{n_4}}\overline{\alpha_{n_5}}\overline{\alpha_{n_6}}  \cdot 
\end{eqnarray}
\begin{eqnarray}
\cdot \int\int \frac{1}{(1+4 t^2)^{3/2}} \left( \frac{1-2it}{1+2it} \right)^{\frac{n_1+n_2+n_3-n_4-n_5-n_6}{2}} 
H_{n_1} H_{n_2} H_{n_3} 
H_{n_4} H_{n_5}H_{n_6}  \cdot
\end{eqnarray}
\[
\cdot \exp \left (-\frac{3 x^2}{1+4t^2} \right ) dx dt,
\]
where $H_{n_i}=H_{n_i}(\xi)$ with   $\xi = \frac{x}{\sqrt{1+4 t^2}}$.

Next, we make the change of variables $\xi = \frac{x}{\sqrt{1+4 t^2}}$ and by some miracle the space and time integrals decouple and we have:

\begin{eqnarray}
H=  \sum_{ n_1 \dots n_6} c_{n_1} c_{n_2} c_{n_2} c_{n_4} c_{n_5} c_{n_6} 
\alpha_{n_1}  \alpha_{n_2} \alpha_{n_2} \overline{\alpha_{n_4}}\overline{\alpha_{n_5}}\overline{\alpha_{n_6}} 
\end{eqnarray}
\begin{eqnarray}
\int \frac{dt}{1+4 t^2 }\left( \frac{1-2it}{1+2it} \right)^{\frac{n_1+n_2+n_3-n_4-n_5-n_6}{2}}  
\left( \int H_{n_1} H_{n_2} H_{n_3}H_{n_4} H_{n_5} H_{n_6}  e^{-3 \xi^2}  d\xi \right),
\end{eqnarray}
where $H_{n_i}=H_{n_i}(\xi)$.

What's more, once decoupled in this fashion, we find that the time integral can be evaluated explicitly.
Denote
\begin{equation}
\hspace{-5mm} \Lambda_{n_1,n_2,n_3,n_4,n_5,n_6} = c_{n_1} c_{n_2} c_{n_2} c_{n_4} c_{n_5} c_{n_6}  \int H_{n_1} H_{n_2} H_{n_3} H_{n_4} H_{n_5} H_{n_6}  e^{-3 \xi^2} dx.
\end{equation}

\noindent

We now have:
\begin{lemma}\label{l:time_integral}
Let  $r\neq 0$, then
\[  
\int   \frac{dT}{1+4T^2}   \left ( \frac{1-i2T}{1+i2T} \right )^{r} = \frac 1 2 \frac{\sin r\pi}{r}
\]
and if $r=0$ then the integral is equal to $\pi/2$.
\end{lemma}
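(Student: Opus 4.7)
The plan is to evaluate the integral via the trigonometric substitution $2T = \tan\theta$ with $\theta \in (-\pi/2, \pi/2)$, which linearises both the measure and the Möbius factor simultaneously.

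First I would compute the effect of the substitution. Since $dT = \tfrac12 \sec^2\theta\,d\theta$ and $1+4T^2 = \sec^2\theta$, we get $\frac{dT}{1+4T^2} = \tfrac12\,d\theta$. More importantly, the Möbius factor simplifies dramatically:
\begin{equation*}
\frac{1-2iT}{1+2iT} = \frac{1-i\tan\theta}{1+i\tan\theta} = \frac{\cos\theta - i\sin\theta}{\cos\theta + i\sin\theta} = e^{-2i\theta}.
\end{equation*}
A brief remark is in order here on what the expression $\left(\tfrac{1-2iT}{1+2iT}\right)^r$ actually means for non-integer $r$: as $T$ ranges over $\mathbb{R}$, the quotient traces the unit circle minus the point $-1$, with argument in $(-\pi,\pi)$, so the principal branch of the power is natural, and on it the identity above becomes $\left(\tfrac{1-2iT}{1+2iT}\right)^r = e^{-2ir\theta}$.

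Plugging in and assuming $r \neq 0$, the integral reduces to an elementary exponential integral:
\begin{equation*}
\int_{-\pi/2}^{\pi/2} \tfrac12\, e^{-2ir\theta}\, d\theta = \frac{1}{2}\cdot \frac{e^{-ir\pi} - e^{ir\pi}}{-2ir} = \frac{1}{2}\cdot \frac{\sin(r\pi)}{r},
\end{equation*}
which is the claimed formula. For $r=0$ the integrand becomes simply $\tfrac12$, and integration over $(-\pi/2,\pi/2)$ yields $\pi/2$, matching the stated value (and matching the limit $r\to 0$ of the general formula).

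I do not anticipate a serious obstacle: the only point requiring any real care is the branch choice for the power, which I would address in a single sentence as above. The remainder is a direct substitution followed by an elementary antiderivative.
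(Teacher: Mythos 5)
Your proof is correct and follows essentially the same route as the paper: the substitution $2T=\tan\theta$ (equivalently $\theta=\atan 2T$), the identification $\left(\tfrac{1-2iT}{1+2iT}\right)^r=e^{-2ir\theta}$, and the resulting elementary integral $\tfrac12\int_{-\pi/2}^{\pi/2}e^{-2ir\theta}\,d\theta$. Your added remark on the branch choice for non-integer $r$ is a small point of care the paper leaves implicit.
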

\begin{proof}

\[  
\int   \frac{dT}{1+4T^2}   \left ( \frac{1-i2T}{1+i2T} \right )^{r} = \int  \frac{dT}{1+4T^2} 
\frac{e^{-ir\atan 2T}}{e^{ir \atan 2T}} = \int_{-\pi/2}^{+\pi/2} \frac{  e^{-i2rs} ds}{2\cos^2 s (1+\tan^2 s)}= 
\]
\begin{eqnarray}
\label{eq:time_int}
\frac 1 2 \int_{-\pi/2}^{+\pi/2} e^{-i2rs} ds = \frac 1 2 \frac{\sin r\pi}{r}.
\end{eqnarray}

\end{proof}

\begin{rmk} Note that integral vanishes if $r$ is a non-zero integer. \end{rmk}

Note that by parity considerations, $\Lambda_{n_1,n_2,n_3,n_4,n_5,n_6} =0 $ unless $n_1+ \dots + n_6$ is even.  This in turn means
that either $n_1+n_2+n_3$ and $n_4+n_5+n_6$ are either both even or both odd.  In either case, $n_1+n_2+n_3-n_4-n_5-n_6$ is even
and hence $\frac{n_1+n_2+n_3-n_4-n_5-n_6}{2}$ is an integer and hence by using the integral \eqref{eq:time_int}
\begin{equation}
 \int \frac{1}{(1+4 t^2) }\left( \frac{1-2it}{1+2it} \right)^{\frac{n_1+n_2+n_3-n_4-n_5-n_6}{2}} dt = 0
\end{equation}
unless $n_1+n_2+n_3-n_4-n_5-n_6=0$.  Thus, we have 

\begin{eqnarray}
\int\int |e^{i t \partial_x^2} f |^6 dx dt = \frac{2}{3} \sum_{k=0}^{\infty} \sum_{\begin{array}{c} n_1+n_2+n_3 = k \\ n_4+n_5+n_6 = k \end{array}}
\alpha_{n_1}  \alpha_{n_2} \alpha_{n_2} \overline{\alpha_{n_4}}\overline{\alpha_{n_5}}\overline{\alpha_{n_6}}  \ \Lambda_{n_1,n_2,n_3,n_4,n_5,n_6}.
\end{eqnarray}

Hence, in terms of the coefficients $\alpha_j$, we have a representation of the Strichartz functional as
\begin{equation}\label{eq:strfunc}
S[f] = \frac{\frac{2}{3} \sum_{k=0}^{\infty} \sum_{\begin{array}{c} n_1+n_2+n_3 = k \\ n_4+n_5+n_6 = k \end{array}}
\alpha_{n_1}  \alpha_{n_2} \alpha_{n_3} \overline{\alpha_{n_4}}\overline{\alpha_{n_5}}\overline{\alpha_{n_6}}  \ \Lambda_{n_1,n_2,n_3,n_4,n_5,n_6} }{( \sum_{n=0}^{\infty} | \alpha_n |^2 )^3}\ .
\end{equation}


\begin{rmk} Note that this expression for the Strichartz functional is rather surprising.  In its original form
\eqref{e:Strichartz_defn}, the functional involved the entire trajectory of the function under the $f$ under the Schr\"odinger flow.  However, in \eqref{eq:strfunc}, we have reduced it to an expression involving only the spatial dependence of $f$ - the time dependence has been completely eliminated.
\end{rmk}

Using the form \eqref{eq:strfunc},  the associated gradient flow of $S[f]$ can be written as:
\begin{eqnarray}
\dot{\alpha}_{\ell} &=&  - \frac{ \partial }{\partial \overline{\alpha}_{\ell}} S[f]  \\
&=& \frac{-2  \sum_{k=0}^{\infty} \sum_{\begin{array}{c} n_1 +n_2+n_3 = k \\ n_4+n_5+ \ell = k \end{array}}
  \alpha_{n_1} \alpha_{n_2} {\alpha_{n_3}}{\overline{\alpha}_{n_4}}{\overline{\alpha}_{n_5}}  \ \Lambda_{n_1,n_2,n_3,n_4,n_5, \ell} }{( \sum_{n=0}^{\infty}
  | \alpha_n |^2 )^3} \\ 
  &+&   \frac{ 2 \alpha_{\ell}  \sum_{k=0}^{\infty} \sum_{\begin{array}{c} n_1+n_2+n_3 = k \\ n_4+n_5+n_6 = k \end{array}}
\alpha_{n_1}  \alpha_{n_2} \alpha_{n_3}{\alpha_{n_4}}{\alpha_{n_5}}{\alpha_{n_6}}  \ \Lambda_{n_1,n_2,n_3,n_4,n_5,n_6} }{( \sum_{n=0}^{\infty} | \alpha_n |^2 )^4}
\end{eqnarray}

\begin{lemma}
Every sequence $\alpha \in \ell^2$ of the form
\begin{equation}
\alpha_m = \left\{ \begin{array}{c  c} A & {\mathrm{if}} \ \ m = p* \\ 0 & {\mathrm{otherwise}} \end{array} \right.
\end{equation}
is a fixed point for the Strichartz flow.  
\end{lemma}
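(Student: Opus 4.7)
The plan is to substitute the single-mode ansatz $\alpha_m = A\delta_{m,p^*}$ directly into the gradient expression just displayed above the lemma and verify, index by index, that $\dot\alpha_\ell = 0$. The structural input is the resonance constraint $n_1+n_2+n_3 = n_4+n_5+n_6$ in the sums defining $S$, inherited from Lemma \ref{l:time_integral}: this constraint is what forces the single-mode configuration to be a fixed point. I anticipate no real difficulty; the main care is to keep conjugations and the combinatorial multiplicity straight (the factor of three coming from differentiating any one of three antiholomorphic factors is what converts the $2/3$ in \eqref{eq:strfunc} into the $2$'s appearing in the gradient formula).

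First take $\ell \ne p^*$. The second term of $\dot\alpha_\ell$ carries an overall factor $\alpha_\ell = 0$ and so vanishes. In the first term, the product $\alpha_{n_1}\alpha_{n_2}\alpha_{n_3}\overline{\alpha_{n_4}}\overline{\alpha_{n_5}}$ is nonzero only when $n_1 = n_2 = n_3 = n_4 = n_5 = p^*$, but then the resonance constraint $n_1+n_2+n_3 = n_4+n_5+\ell$ becomes $3p^* = 2p^* + \ell$, forcing $\ell = p^*$ — a contradiction. Hence that sum is empty, and $\dot\alpha_\ell = 0$ for every $\ell \ne p^*$.

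For $\ell = p^*$ both terms are nonzero and I will show that they cancel. In each sum the only surviving sextuple is $(p^*,\ldots,p^*)$ (with $k = 3p^*$), and the resonance constraint is satisfied automatically. The first term collapses to
\begin{equation*}
\frac{-2\,A^{3}\overline{A}^{2}\,\Lambda_{p^*,\ldots,p^*}}{|A|^{6}} \;=\; \frac{-2A\,\Lambda_{p^*,\ldots,p^*}}{|A|^{2}},
\end{equation*}
while the second term collapses to
\begin{equation*}
\frac{2A\cdot |A|^{6}\,\Lambda_{p^*,\ldots,p^*}}{|A|^{8}} \;=\; \frac{2A\,\Lambda_{p^*,\ldots,p^*}}{|A|^{2}}.
\end{equation*}
These cancel exactly, so $\dot\alpha_{p^*} = 0$ as well, and the entire sequence is a fixed point. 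Conceptually, the off-diagonal components $\ell \ne p^*$ vanish because only one mode is populated plus the resonance constraint, while the on-diagonal component $\ell = p^*$ vanishes because the variation of the numerator of $S$ is exactly compensated by the variation of its denominator along the ray $\{A f_{p^*}\}$.
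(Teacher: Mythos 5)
Your proof is correct and follows essentially the same route as the paper's (much terser) argument: the resonance constraint $n_1+n_2+n_3=n_4+n_5+\ell$ forces the first term to vanish when $\ell\ne p^*$, the prefactor $\alpha_\ell$ kills the second, and at $\ell=p^*$ the two surviving contributions cancel exactly. Your version simply makes explicit the bookkeeping the paper leaves to the reader.
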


\begin{proof} This follows because the only way for the sums in the numerator to be non-zero is
if all the indices are equal to $p*$ and in this case, both terms vanish if $\ell \ne p*$ and they exactly cancel each other if $\ell = p*$.  
\end{proof}

\begin{rmk}  This implies that any multiple of a Hermite function is a critical point for the gradient
flow associated with the Strichartz functional.
\end{rmk}

\begin{rmk}  Another natural question is whether or not these are the only fixed points - this
would then suggest that they are the most likely candidates for yielding the best
constant in the Strichartz inequality.  So far, we haven't been able to prove that there are no other
critical points, 
though we conjecture that this is the case.
\end{rmk}

\begin{rmk}

There is an alternative dynamical formulation of the Strichartz integral in which it is interpreted 
as the Hamiltonian functional.  The equations of motion are then 
given using the familiar symplectic structure
\begin{equation}
u_t = i D_{\bar u} H.
\end{equation}

If we rewrite  Strichartz Hamiltonian by expanding $u$ in terms of the Hermite functions
as we did above, $H$ takes the form
\begin{equation}
H =  \sum_{\begin{array}{c} n_1+n_2+n_3 =  \\ n_4+n_5+n_6 \end{array}}  \Lambda_{n_1,n_2,n_3,n_4,n_5,n_6} 
\alpha_{n_1}  \alpha_{n_2} \alpha_{n_3} \overline{\alpha_{n_4}}\overline{\alpha_{n_5}}\overline{\alpha_{n_6}},
\end{equation}
and the equations of motion are given by
\begin{equation}
\dot \alpha_{\ell} =  i \frac{\partial H}{\partial \overline{\alpha}_{\ell}}.
\end{equation}

First of all, 
it is easy to see in Hermite basis  that the Strichartz Hamiltonian is invariant under the flow of the quantum harmonic oscillator discussed earlier.  In this case, the Hamiltonian is given by 
\begin{equation}
Q = \sum_{n=0}^{\infty} \, (n+\frac 1 2) \, \alpha_n \overline{\alpha}_n\ .
\end{equation}
It is also invariant if we replace $u$ by its Fourier transform, which just
multiplies the coefficients $\alpha_n$ by an $n$-dependent phase:
and under Fourier transfrom 
\begin{equation}
{\mathcal F} (\alpha_n) = e^{i\frac{\pi}{2} n} \alpha_n.
\end{equation}

Following the approach of Hani et.al., the fact that Strichartz flow commutes with the flow of quantum harmonic oscillator implies that Strichartz flow
leaves any Hermite function invariant. For the reader's convenience we give an outline of the argument from \cite{faou}. 

As the Strichartz and quantum harmonic oscillator  Hamiltonian flows commute, we can write 
\begin{equation}
e^{iLt}U(s,f) = U(s,e^{iLt} f),
\end{equation}
where $e^{iLt}f$ is the flow of quantum  harmonic oscillator with $L= \dd^2 - x^2$ and $U(s,f)$ is the Strichartz Hamiltonian flow that evolves  initial function $f$ to the new function $U(s,f)$ after time $s$. Let now, $f=f_n$, be an eigenfunction of $L$, which is a Hermite function in our particular case. Then, we have
\begin{equation}
e^{iLt} f_n = e^{i\lambda_n t}f_n \Rightarrow e^{iLt}U(s,f_n) = U(s,e^{i\lambda_n t} f_n) = e^{i\lambda_n t} U(s, f_n),
\end{equation}
where in the last equality, we used phase invariance of the Strichartz Hamiltonian flow.
Thus, we have 
\begin{equation}
e^{iLt}U(s,f_n) = e^{i\lambda_n t} U(s,f_n).
\end{equation}
Since, all eigenvalues of $L$ are simple, differentiating with respect to $t$ and setting $t=0$ we must have 
\begin{equation}
U(s,f_n) = c_n(s) f_n.
\end{equation}
Differentiating with respect to $s$ and setting $s=0$, we obtain $D_{\bar u} H(f_n) = c_n f_n$,
from which we conclude that the Hermite functions are periodic orbits for the 
Hamiltonian flow generated by the Strichartz functional. Note that this is in contrast to the case
of the gradient flow discussed earlier in this section where the Hermite functions were stationary points. 

\end{rmk}

\begin{rmk} The previous discussion of the Hamiltonian flow and its relationship to the Strichartz gradient flow assume that we are still working in one spatial dimension.  The case of higher dimensions will
be treated in a later section.
\end{rmk}

\section{Relation between constrained and unconstrained Hessians }
In this section we describe the relation between critical points corresponding to  Hermite functions in the constrained Hamiltonian and in  the  gradient flow.
While, some results can be extended to arbitrary critical points, we concentrate on those which we already know and which  will be used in the subsequent sections: Hermite functions. We also conjecture that the Hermite functions are the only critical points.

\subsection{Critical points}
We use the notation from the previous section 
\[
\alpha = (\alpha_0, \alpha_1, \alpha_2, ... ), \,\,\, \alpha_n \in {\mathbb C}.
\]
We will denote by $\alpha_k^*$ the point where $\alpha_n =0$ if $n\neq k$ and $\alpha_k \neq 0.$
We will also use real and imaginary parts of the coefficients, with $\alpha_n= p_n+i q_n$ and 
$\overline \alpha_n = p_n-iq_n$. Even though in the subsequent sections  we will mainly use real variables, some calculations in this section are more conveniently done 
in the complex variables.  Then, we restate the results in terms of the real variables.

Consider the functional  given by \eqref{eq:strfunc}
\begin{equation}
S(\alpha) = \frac{H(\alpha)}{P^3(\alpha)},
\end{equation}
where $H$ is a real-valued homogeneous polynomial  of degree 6  and $P(\alpha) = \sum |\alpha_n|^2$.
The main goal of this section is  to understand the relation between  critical points corresponding to  Hermite functions and their stability  in $S(\alpha)$ and  in $H(\alpha)$ subjected to the  constraint 
$P(\alpha) =C$. First, we observe  that both variational problems indeed have Hermite functions as critical points. 
\begin{lemma}\label{lem:constrained}
The point $\alpha_k^*$ is a critical point of $S$ if and only if $\alpha_k^*$ is a critical point of $H$ with the constraint $P(\alpha)=C.$
\end{lemma}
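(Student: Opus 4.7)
The plan is to treat this as an instance of a general fact about degree-zero rational functions on a vector space, proved via Lagrange multipliers together with Euler's identity for homogeneous polynomials. First I would work in the real variables $(p_n,q_n)$ defined by $\alpha_n=p_n+iq_n$; in these variables $H$ is a homogeneous polynomial of degree $6$, $P$ is a homogeneous polynomial of degree $2$, and $S=H/P^3$ is homogeneous of degree $0$.

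Next I would compute the gradient of $S$ directly by the quotient rule:
$$
\nabla S(\alpha) \;=\; \frac{1}{P(\alpha)^3}\Bigl(\nabla H(\alpha) \;-\; \frac{3H(\alpha)}{P(\alpha)}\,\nabla P(\alpha)\Bigr),
$$
so $\nabla S(\alpha)=0$ is equivalent to
$$
\nabla H(\alpha) \;=\; \frac{3H(\alpha)}{P(\alpha)}\,\nabla P(\alpha). \qquad (\ast)
$$
On the other hand, the Lagrange multiplier theorem applied to $H$ restricted to $\{P=C\}$ says that $\alpha^*$ is a constrained critical point if and only if there exists $\lambda\in\R$ with $\nabla H(\alpha^*)=\lambda\,\nabla P(\alpha^*)$. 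Both implications of the lemma therefore reduce to pinning down the value of $\lambda$.

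To identify $\lambda$, I would take the Euclidean inner product of the Lagrange equation with $\alpha^*$ (viewed as the real vector with components $p_n,q_n$) and apply Euler's identity: for a real polynomial $F$ homogeneous of degree $d$, one has $\alpha\cdot\nabla F(\alpha)=d\,F(\alpha)$. This yields $6H(\alpha^*)=2\lambda\, P(\alpha^*)$, and since $\alpha_k^*\neq 0$ implies $P(\alpha_k^*)>0$ we may divide to conclude $\lambda=3H(\alpha^*)/P(\alpha^*)$. Substituting back gives exactly $(\ast)$, proving one direction; the converse is immediate by reading $(\ast)$ as the Lagrange condition with $\lambda=3H/P$.

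There is no genuine obstacle here: the argument is a general fact about degree-zero quotients of homogeneous polynomials and in particular does not use that the critical point is concentrated on a single Hermite mode, so the same equivalence extends to any nonzero $\alpha^*$. The only point that deserves a line of comment is that the gradient in question is taken with respect to the \emph{real} variables $(p_n,q_n)$ rather than the complex $\alpha_n$, which is forced on us because $H$ and $S$ are not holomorphic in $\alpha$ but only smooth as functions of its real and imaginary parts; with that convention both $H$ and $P$ satisfy the Euler identity used above.
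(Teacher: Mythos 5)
Your proof is correct, but it takes a genuinely different route from the paper's. You prove the general statement for an arbitrary nonzero point $\alpha^*$ by combining the quotient rule, the Lagrange multiplier characterization of constrained critical points, and Euler's identity for homogeneous functions to pin the multiplier down as $\lambda = 3H(\alpha^*)/P(\alpha^*)$ — at which point the Lagrange condition $\nabla H = \lambda \nabla P$ is literally the statement $\nabla S = 0$. The paper instead argues coordinate-by-coordinate at the special point $\alpha_k^*$: since $\partial_{\alpha_n} P(\alpha_k^*) = 0$ for $n \neq k$, the partial derivatives of $H$ and of $S$ in those directions are positive multiples of one another, and the remaining $\alpha_k$-direction is disposed of by the scaling and phase invariance $S(\sigma \alpha) = S(\alpha)$, which forces $\partial_{\alpha_k} S(\alpha_k^*) = \partial_{\overline{\alpha}_k} S(\alpha_k^*) = 0$ while that direction is (essentially) normal to the constraint. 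What your approach buys is generality and a cleaner conceptual statement: it does not use that the critical point is concentrated on a single Hermite mode, so it would apply to any conjectural critical point, which is relevant since the paper explicitly notes that some of its results could be extended to arbitrary critical points. What the paper's approach buys is self-containedness: it never invokes the Lagrange multiplier theorem in the infinite-dimensional setting. In your write-up you should add one line checking that the constraint is regular — $\nabla P(\alpha^*) = 2\alpha^* \neq 0$ since $P(\alpha^*) = C > 0$ — so that the Lagrange characterization is legitimate on the sphere in $\ell^2$; with that remark included the argument is complete.
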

\begin{proof}
First, observe that for any $n\neq k$,
\begin{equation}
\frac{\partial H}{ \partial \alpha_n}(\alpha_k^*) = P(\alpha_k^*)^3 \frac{\partial S}{\partial  \alpha_n}( \alpha_k^*),
\end{equation}
since $\partial_{\alpha_j} P(\alpha_k^*)$  = 0. A similar identity holds for  $\dd/\dd \overline \alpha_n$.

Second, by invariance $S(\sigma \alpha_k^*) = S(\alpha_k^*)$ so that we have (differentiating along the real $\sigma \in \R$ and imaginary  $\sigma\in i\R$ directions at $\sigma=1$).
\begin{equation}
\alpha_k \frac{\partial S}{\partial \alpha_k}(\alpha_k^*) + \overline \alpha_k \frac{\partial S}{\partial \overline\alpha_k}(\alpha_k^*) =0
\end{equation}
and 
\begin{equation}
 i \alpha_k \frac{\partial S}{\partial \alpha_k}(\alpha_k^*) -i  \overline \alpha_k \frac{\partial S}{\partial \overline\alpha_k}(\alpha_k^*) =0,
\end{equation}
which implies $\partial_{\alpha_k} S(\alpha_k^*)= \partial_{\overline \alpha_k} S(\alpha_k^*) = 0.$ Note that we don't have to differentiate $H$ with respect to $\alpha_k$ due to the constraint, i.e. the corresponding terms do not enter the gradient.

\end{proof}
\begin{rmk}
The same conclusion (first partial derivatives vanish at $\alpha_k^*$) holds in real coordinates $(p_n,q_n)$.
\end{rmk}

\subsection{Hessians}
Now, we consider the Hessian of $S(\alpha)$ at a  critical point $\alpha_k^*$ and evaluate  partial derivatives of the second order involving at least one partial derivative 
 $\partial_{\alpha_k}$ or $\partial_{\overline \alpha_k}$.
\begin{lemma}
\[
\frac{\partial^2 S}{\partial \alpha_k \partial \alpha_n}(\alpha_k^*) = \frac{\partial^2 S}{\partial \alpha_k \partial  \overline \alpha_n}(\alpha_k^*) = 
\frac{\partial^2 S}{\partial \overline \alpha_k \partial \alpha_n}(\alpha_k^*) = 0, \,\, {\rm for} \,\, {\rm any} \,\, n.
\]
\end{lemma}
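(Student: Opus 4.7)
The plan is to exploit the two homogeneity symmetries of $S$. Because $H$ has combined degree $(3,3)$ in $(\alpha, \overline\alpha)$ and $P^3$ has the same bidegree, the ratio $S$ is invariant under $\alpha\mapsto \sigma\alpha$ for every $\sigma\in\C^*$. Treating $\sigma$ and $\overline\sigma$ as independent Wirtinger variables and differentiating $S(\sigma\alpha, \overline\sigma\overline\alpha) = S(\alpha,\overline\alpha)$ once in $\sigma$ and once in $\overline\sigma$, then setting $\sigma=1$, yields at every point $\alpha$ the two separate Euler identities
\begin{equation}
\sum_m \alpha_m \frac{\partial S}{\partial \alpha_m}(\alpha) = 0, \qquad \sum_m \overline\alpha_m \frac{\partial S}{\partial \overline\alpha_m}(\alpha) = 0.
\end{equation}

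Next I would differentiate each identity a second time, with respect to $\alpha_n$ or $\overline\alpha_n$, and evaluate at the critical point $\alpha_k^*$. For instance, differentiating the first identity in $\alpha_n$ gives
\begin{equation}
\sum_m \alpha_m \frac{\partial^2 S}{\partial \alpha_n \partial \alpha_m}(\alpha) + \frac{\partial S}{\partial \alpha_n}(\alpha) = 0.
\end{equation}
At $\alpha = \alpha_k^*$ only the $m=k$ term of the sum survives (since $\alpha_m = 0$ for $m \neq k$), and the isolated first-partial term vanishes by Lemma \ref{lem:constrained}. One is left with $\alpha_k\,\partial^2_{\alpha_n\alpha_k}S(\alpha_k^*) = 0$, and since $\alpha_k\neq 0$ the mixed partial vanishes. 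The three remaining combinations---differentiating the first Euler identity in $\overline\alpha_n$ and the second in $\alpha_n$ or $\overline\alpha_n$---are handled identically, and cover the two other mixed partials in the statement (as well as $\partial^2 S/\partial\overline\alpha_k\partial\overline\alpha_n$, which is in any case the complex conjugate of the first and therefore automatically zero because $S$ is real-valued).

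There is essentially no obstacle in carrying this out; the whole argument is a direct application of Euler's relations for the scaling and phase symmetries of $S$. The one point that warrants care is confirming that Lemma \ref{lem:constrained} really does furnish $\partial_{\alpha_n}S(\alpha_k^*) = \partial_{\overline\alpha_n}S(\alpha_k^*) = 0$ for every $n$, including $n = k$---which it does, because that lemma invokes the very same $\sigma$-invariance to cover the boundary cases $n = k$.
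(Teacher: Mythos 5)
Your proposal is correct and follows essentially the same route as the paper: both derive Euler-type identities from the scaling invariance $S(\sigma\alpha)=S(\alpha)$, differentiate them once more, and evaluate at $\alpha_k^*$ using that the first partials vanish there and that $\alpha_m=0$ for $m\neq k$. The only (cosmetic) difference is that you obtain the two decoupled identities $\sum_m\alpha_m\partial_{\alpha_m}S=0$ and $\sum_m\overline\alpha_m\partial_{\overline\alpha_m}S=0$ directly by treating $\sigma$ and $\overline\sigma$ independently, whereas the paper differentiates along the real and imaginary $\sigma$-directions and then takes the linear combinations of the resulting pair.
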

\begin{proof}
Differentiating the relation 
\begin{equation}
S(\alpha) = S(\sigma \alpha)=S(\sigma \alpha_1, \overline{\sigma \alpha}_1,  \sigma \alpha_2, \overline{\sigma \alpha}_2, ...)
\end{equation}
along the real direction $(\sigma = 1+\epsilon)$, we obtain
\begin{equation}
\sum_{n} \alpha_n \dd_{\alpha_n} S + \overline{\alpha}_n \dd_{ \overline{\alpha}_n } S =0
\end{equation}
and differentiating along the imaginary direction we get
\begin{equation}
\sum_{n} \alpha_n \dd_{\alpha_n} S  - \overline{\alpha}_n \dd_{ \overline{\alpha}_n } S.
\end{equation}
Next, differentiate both relations with respect to $\alpha_k$ and evaluate at $\alpha_k^*$:
\begin{eqnarray}
\alpha_k   \dd^2_{\alpha_k \alpha_k} S +    \overline{\alpha}_k \dd^2_{ \overline{\alpha}_k  \alpha_k} S  = 0 \\
\alpha_k   \dd^2_{\alpha_k \alpha_k} S -    \overline{\alpha}_k \dd^2_{ \overline{\alpha}_k  \alpha_k} S = 0.
\end{eqnarray}
All other terms vanish because  they either contain first partial derivatives (which vanish as $\alpha_k^*$ is a critical point) or because of $\alpha_m=0$ if $m\neq k$. 
Since $\alpha_k \neq 0$, we immediately obtain
\be
\dd^2_{\alpha_k \alpha_k} S(\alpha_k^*) = \dd^2_{ \overline{\alpha}_k  \alpha_k} S(\alpha_k^*) = 0.
\ee
Similarly, differentiating over $\overline{\alpha}_k$, we obtain that 
\[
 \dd^2_{ \overline{\alpha}_k  \overline{\alpha}_k} S(\alpha_k^*) = 0.
\]
Next, differentiating  over $\alpha_m, m\neq k$, we obtain
\begin{eqnarray}
\alpha_k   \dd^2_{\alpha_k \alpha_m} S +    \overline{\alpha}_k \dd^2_{ \overline{\alpha}_k  \alpha_m} S  = 0 \\
\alpha_k   \dd^2_{\alpha_k \alpha_m} S -    \overline{\alpha}_k \dd^2_{ \overline{\alpha}_k  \alpha_m} S  =0.
\end{eqnarray}
Again all other terms vanish and since  $\alpha_k \neq 0$, we obtain
\be
\dd^2_{\alpha_k \alpha_m} S(\alpha_k^*) = \dd^2_{ \overline{\alpha}_k  \alpha_m} S(\alpha_k^*) = 0.
\ee
Finally, differentiating over $\overline{\alpha}_m, m\neq k$, we obtain
\be
\dd^2_{\alpha_k \overline{\alpha}_m} S(\alpha_k^*) = \dd^2_{ \overline{\alpha}_k  \overline{\alpha}_m} S(\alpha_k^*) = 0.
\ee

\end{proof}

\begin{corollary}
All second order partial derivatives in the $(p,q)$ coordinates vanish if they contain $\partial_{p_k}$ or $\partial_{q_k}$.
\end{corollary}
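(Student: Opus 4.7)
The plan is to derive this corollary as a direct consequence of the previous lemma by a change of variables between the complex coordinates $(\alpha_n, \overline{\alpha}_n)$ and the real coordinates $(p_n, q_n)$. Since $\alpha_n = p_n + iq_n$ and $\overline{\alpha}_n = p_n - iq_n$, the Wirtinger-type relations
\begin{equation}
\partial_{p_n} = \partial_{\alpha_n} + \partial_{\overline{\alpha}_n}, \qquad \partial_{q_n} = i\bigl(\partial_{\alpha_n} - \partial_{\overline{\alpha}_n}\bigr)
\end{equation}
hold. Consequently, any first-order operator $\partial_{p_k}$ or $\partial_{q_k}$ is a linear combination of $\partial_{\alpha_k}$ and $\partial_{\overline{\alpha}_k}$.

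Next, I would observe that any second-order partial derivative in the $(p,q)$ coordinates that contains $\partial_{p_k}$ or $\partial_{q_k}$ can therefore be expanded as a linear combination of second-order partial derivatives in $(\alpha, \overline{\alpha})$ coordinates, each of which involves at least one differentiation with respect to $\alpha_k$ or $\overline{\alpha}_k$. Explicitly, for any index $m$ (including $m=k$), one has expressions such as
\begin{equation}
\partial^2_{p_k p_m} S = \partial^2_{\alpha_k \alpha_m} S + \partial^2_{\alpha_k \overline{\alpha}_m} S + \partial^2_{\overline{\alpha}_k \alpha_m} S + \partial^2_{\overline{\alpha}_k \overline{\alpha}_m} S,
\end{equation}
and analogous identities for $\partial^2_{p_k q_m}$, $\partial^2_{q_k p_m}$, and $\partial^2_{q_k q_m}$, with coefficients $\pm 1$ or $\pm i$.

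Finally, the preceding lemma asserts that every one of the second partials $\partial^2_{\alpha_k \alpha_m} S$, $\partial^2_{\alpha_k \overline{\alpha}_m} S$, $\partial^2_{\overline{\alpha}_k \alpha_m} S$, and $\partial^2_{\overline{\alpha}_k \overline{\alpha}_m} S$ vanishes at the critical point $\alpha_k^*$, for every $m$ (the case $m=k$ is handled explicitly in the lemma, and the case $m \neq k$ as well). Substituting these vanishing values into the change-of-variables expansions above yields that every $(p,q)$ second partial containing $\partial_{p_k}$ or $\partial_{q_k}$ vanishes at $\alpha_k^*$, which is the claim. Since the whole argument is bookkeeping through a linear change of variables, the only potential obstacle is notational — making sure no case ($m=k$ versus $m\neq k$, or the pairing $\partial_{p_k}\partial_{q_k}$) is overlooked — but the lemma covers all of them.
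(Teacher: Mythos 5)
Your proposal is correct and is exactly the computation the paper's one-line proof (``straightforward using $\alpha_k = p_k + iq_k$'') has in mind: expand $\partial_{p_k}, \partial_{q_k}$ as linear combinations of $\partial_{\alpha_k}, \partial_{\overline{\alpha}_k}$ and invoke the preceding lemma, whose proof does establish the vanishing of all four second partials $\partial^2_{\alpha_k \alpha_n}S$, $\partial^2_{\alpha_k \overline{\alpha}_n}S$, $\partial^2_{\overline{\alpha}_k \alpha_n}S$, $\partial^2_{\overline{\alpha}_k \overline{\alpha}_n}S$ at $\alpha_k^*$ for every $n$. You simply supply the bookkeeping the paper omits.
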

\begin{proof}
The calculation is straightforward using $\alpha_k = p_k+iq_k$.
\end{proof}

\begin{theorem}
Hessians evaluated at any  Hermite function of the the restricted Hamiltonian and of the gradient flow functional coincide for all second order partial derivatives that do not involve 
$\alpha_k,\overline \alpha_k$ .
\end{theorem}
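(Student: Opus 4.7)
The plan is to compute both Hessians explicitly and see that, along directions $\alpha_n, \bar\alpha_n$ with $n \neq k$, they differ only by the positive overall factor $1/P(\alpha_k^*)^3$ (which is $1$ if the Hermite function is normalized). The key observation that makes the comparison clean is that the normal direction to the level set $\{P=C\}$ at $\alpha_k^*$ lies entirely in the $(\alpha_k,\bar\alpha_k)$ subspace, so any variation in the remaining coordinates is automatically tangent to the constraint manifold.

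First I would differentiate $S=H/P^3$ via the quotient rule twice. At a general point this produces six terms (the pure $H$-Hessian piece, two cross terms with $\nabla H\cdot\nabla P$, a term with the Hessian of $P$, and a term with $\nabla P\otimes\nabla P$), but at $\alpha_k^*$ the identities
\[
\partial_{\alpha_n} P(\alpha_k^*)=\bar\alpha_n\big|_{\alpha_k^*}=0,\qquad \partial_{\bar\alpha_n} P(\alpha_k^*)=\alpha_n\big|_{\alpha_k^*}=0\quad(n\neq k)
\]
together with $\partial^2_{\bar\alpha_m\alpha_n}P=\delta_{mn}$ and $\partial^2_{\alpha_m\alpha_n}P=0$ kill almost everything, leaving the clean formulas
\[
\partial^2_{\bar\alpha_m\alpha_n}S(\alpha_k^*)=\frac{1}{P^3}\Bigl(\partial^2_{\bar\alpha_m\alpha_n}H-\tfrac{3H}{P}\,\delta_{mn}\Bigr),\qquad \partial^2_{\alpha_m\alpha_n}S(\alpha_k^*)=\frac{1}{P^3}\,\partial^2_{\alpha_m\alpha_n}H,
\]
and analogously for $\bar\partial\bar\partial$, valid for $m,n\neq k$.

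Next I would set up the constrained Hessian via the Lagrangian $\mathcal{L}=H-\mu P$. The Lagrange multiplier $\mu$ is pinned down by homogeneity: applying Euler's identity to $H$ (degree $6$) and $P$ (degree $2$) at $\alpha_k^*$ yields $\alpha_k\partial_{\alpha_k}H+\bar\alpha_k\partial_{\bar\alpha_k}H=6H$ and $2|\alpha_k|^2=2P$, so the critical-point relation $\partial_{\alpha_k}H=\mu\,\bar\alpha_k$ forces $\mu=3H(\alpha_k^*)/P(\alpha_k^*)$. The Hessian of $H$ along the constraint manifold $\{P=C\}$ equals the restriction of $D^2\mathcal{L}$ to the tangent space. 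Since the covector $dP$ at $\alpha_k^*$ only has components along $d\alpha_k$ and $d\bar\alpha_k$, the entire $(\alpha_n,\bar\alpha_n)_{n\neq k}$ subspace lies inside the tangent space, and so no further tangent-space projection is required for the entries we care about.

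Comparing now gives
\[
P^3\cdot\partial^2_{\bar\alpha_m\alpha_n}S(\alpha_k^*)=\partial^2_{\bar\alpha_m\alpha_n}H-\mu\,\delta_{mn}=\partial^2_{\bar\alpha_m\alpha_n}\mathcal{L}(\alpha_k^*),\qquad P^3\cdot\partial^2_{\alpha_m\alpha_n}S(\alpha_k^*)=\partial^2_{\alpha_m\alpha_n}\mathcal{L}(\alpha_k^*),
\]
for all $m,n\neq k$, which is precisely the asserted coincidence (up to the positive conformal factor $1/P^3$ that disappears if one normalizes the Hermite function to have unit $L^2$ norm, as is standard). The corresponding statement in $(p_n,q_n)$ coordinates then follows by the same change-of-variables argument used for the previous Corollary.

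The main obstacle is not the calculation itself but the bookkeeping around the Lagrange multiplier: one must verify that $\mu=3H/P$ is the correct value (this is where the $-3H\delta_{mn}/P^4$ term in $D^2S$ comes from, and it is essential that the factor $3$ matches the degree ratio of $H$ to $P^3$). Once $\mu$ is identified via Euler homogeneity, the remaining comparison is a term-by-term check. As a sanity check, the previous Lemma's conclusion that $\partial^2 S$ vanishes whenever one derivative is in the $\alpha_k$ or $\bar\alpha_k$ slot aligns perfectly with the fact that those directions are precisely the normal (scaling) and gauge (phase) directions, which one must discard before comparing to the constrained Hessian.
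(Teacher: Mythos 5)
Your argument is correct, and the first half is the same as the paper's: you expand $D^2S = D^2(HP^{-3})$ by the quotient rule and observe that at $\alpha_k^*$ all terms containing a first derivative of $P$ in a non-$k$ direction vanish, leaving $P^{-3}\bigl(D^2H - \tfrac{3H}{P}D^2P\bigr)$ on the non-$k$ block (the paper records this in real coordinates as $\partial^2_{p_ip_j}H - 6H\delta_{ij}$, which matches your complex-coordinate version since $\partial^2_{p_ip_j}P = 2\delta_{ij}$). Where you genuinely diverge is in how the constrained Hessian of $H$ is computed. The paper does this by brute force: it solves the constraint explicitly, writing $p_k = \cos\phi\sqrt{1-\sum_{i\neq k}(p_i^2+q_i^2)}$, $q_k = \sin\phi\sqrt{1-\sum_{i\neq k}(p_i^2+q_i^2)}$, differentiates the composite function $H^k$ twice, and identifies the surviving correction term as $-\delta_{ij}\bigl(p_k\partial_{p_k}H + q_k\partial_{q_k}H\bigr)$, which it then evaluates to $6H(\alpha_k^*)$ by noting that only the single monomial $|\alpha_k|^6$ contributes. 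You instead invoke the standard second-order optimality formalism: the constrained Hessian is the restriction of $D^2(H-\mu P)$ to the tangent space, with $\mu = 3H/P$ pinned down by Euler homogeneity, and the tangent-space projection is trivial on the non-$k$ block because $dP(\alpha_k^*)$ is supported on the $(\alpha_k,\bar\alpha_k)$ slot. The two computations are of course the same identity in disguise -- your $\mu\,\partial^2 P$ correction is exactly the paper's $\delta_{ij}\bigl(p_k\partial_{p_k}H+q_k\partial_{q_k}H\bigr)$ term, and both are ultimately justified by the homogeneity of the pure-$k$ monomial -- but your route avoids the square-root chart and makes transparent why the answer does not depend on how the constraint is parametrized. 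The one step you take for granted that the paper effectively proves from scratch is the equivalence between the Hessian of $H$ composed with a parametrization of the sphere and the restriction of the Lagrangian Hessian; since that is a standard fact (differentiate the constraint twice to express $DP(\gamma''(0))$ in terms of $-D^2P(v,v)$), this is a matter of exposition rather than a gap. Your closing remark tying the vanishing of the $\alpha_k,\bar\alpha_k$ rows to the normal (scaling) and phase directions is consistent with the paper's preceding lemma and its block-structure remark.
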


\begin{rmk}
This theorem along with the above lemma imply that the Hessian corresponding to the gradient flow evaluated at a Hermite function critical point is a block matrix 
with the main block  consisting of the Hessian of the Hamiltonian and a zero block corresponding to partial derivatives involving $\alpha_k, \overline \alpha_k$   ($p_k,q_k$ in real case).
\end{rmk}

\begin{proof}
Consider now the other entries of the Hessian, which do not involve $\dd_{p_k}, \dd_{q_k}$:
\begin{eqnarray}
\frac{\dd^2}{\dd p_i \dd p_j} S = \frac{\dd^2}{\dd p_i \dd p_j} (HP^{-3}) = \dd_{p_i} ( P^{-3}\dd_{p_j} H - H 3 P^{-4} \dd_{p_j} P)=  \nonumber\\
 P^{-3}\dd_{p_j}  \dd_{p_i} H - 3P^{-4} \dd_{p_i}P \dd_{p_j} H - \dd_{p_i}H 3 P^{-4} \dd_{p_j} P +
 H 12P^{-5} \dd_{p_i} P \dd_{p_j} P - H3P^{-4} \dd_{p_j}  \dd_{p_i} P.
\end{eqnarray}

Assuming that $P(\alpha_k^*)=1$ (the calculations are similar if $P=C\neq 1$) and evaluating the above expression at $\alpha_k^*$, we obtain

\[
\frac{\dd^2}{\dd p_i \dd p_j} S(\alpha_k^*) = \frac{\dd^2}{\dd p_i \dd p_j} H(\alpha_k^*) -
6H(\alpha_k^*) \delta_{ij}.
\]

\noindent
Now, we compute the Hessian of $H(\alpha_k^*)$ restricted to the sphere $P(\alpha)=1$. Let 
\be
\hspace{-10mm} H^k = H(p_0,q_0, ..., p_k = \cos \phi \sqrt{1- \sum_{i\neq k}(p_i^2+q_i^2)}, q_k = \sin \phi \sqrt{1- \sum_{i\neq k}(p_i^2+q_i^2)}, p_{k+1}, q_{k+1}, ...),
\ee
i.e. $p_k,q_k$ variables are expressed as functions of other variables using the constraint.   
Next,
\be
\frac{\dd H^k}{\dd p_j} = \frac{\dd H}{\dd p_j}  -  \frac{\dd H}{\dd p_k} \cos \phi \frac{p_j}{\sqrt{1- \sum_{i\neq k}(p_i^2+q_i^2)}}
-\frac{\dd H}{\dd q_k}   \sin \phi \frac{q_j}{\sqrt{1- \sum_{i\neq k}(p_i^2+q_i^2)}}
\ee
and then 
\be \hspace{-5mm}
\frac{\dd^2  H^k}{\dd p_i \dd p_j} =  \frac{\dd^2  H}{\dd p_i \dd p_j} - \frac{\dd H}{\dd p_k} 
\cos \phi \frac{\delta_{ij}}{\sqrt{1-  \sum_{i\neq k}(p_i^2+q_i^2)}}  - \frac{\dd H}{\dd q_k} 
\sin \phi \frac{\delta_{ij}}{\sqrt{1-\sum_{i\neq k}(p_i^2+q_i^2)}}  +  ....
\ee
where $...$ are the remaining terms which are all multiples of  $p_s$ or $q_s$ with $s\neq k$.
Evaluating at $\alpha_k^*$, we observe that all such terms vanish and since 
\be
\cos \phi \cdot \sqrt{1 - \sum_{i\neq k}(p_i^2+q_i^2)} = p_k
\ee
 and 
 \be
 \sin \phi \cdot \sqrt{1 - \sum_{i\neq k}(p_i^2+q_i^2)} = q_k
 \ee
  we have
\be
\frac{\dd^2  H^k}{\dd p_i \dd p_j}(\alpha_k^*) =  \frac{\dd^2  H}{\dd p_i \dd p_j}(\alpha_k^*) - \delta_{ij}\frac{\dd H}{\dd p_k}(\alpha_k^*) p_k-
 \delta_{ij}\frac{\dd H}{\dd q_k}(\alpha_k^*) q_k,
\ee
where denominators $p_k^2+q_k^2=1$ when evaluated at $\alpha_k^*$.
To verify the desired equality 
\begin{eqnarray}
\frac{\dd^2 S}{\dd p_i \dd p_j} (\alpha_k^*) = \frac{\dd^2  H^k}{\dd p_i \dd p_j}(\alpha_k^*)
\end{eqnarray}
with $i\neq k, j\neq k$, we need to verify
\be
\frac{\dd H}{\dd p_k}(\alpha_k^*) p_k +\frac{\dd H}{\dd q_k}(\alpha_k^*)q_k = 6H(\alpha_k^*) .
\ee
This equality holds because the only terms contributing to both sides must  contain only 
$\alpha_k, \alpha_k^*$, which is really a single monomial  $|\alpha_k^*|^6= (p_k^2+q_k^2)^3.$ The above identity  clearly  holds for this term. 

Similarly we can verify that for $i\neq k, j\neq k$, we also have 

\be
\frac{\dd^2}{\dd q_i \dd q_j} S(\alpha_k^*) = \frac{\dd^2  H^k}{\dd q_i \dd q_j}(\alpha_k^*), \,\,
\frac{\dd^2}{\dd p_i \dd q_j} S(\alpha_k^*) = \frac{\dd^2  H^k}{\dd p_i \dd q_j}(\alpha_k^*).
\ee

\end{proof}

\section{Critical points in the one dimensional case}\label{s:oneDHessian}
Now, we compute the Hessian for the Hamiltonian case with the $L^2-$norm constraint.
To compute the Hessian, consider the second variation starting with off-diagonal terms.
\subsection{Real subspace,  Off-diagonal terms:} First we introduce some useful notation. \\

\noindent
{\bf Notation:} We will distinguish constrained derivatives from  unconstrained derivatives by using $D_S$ instead of $D$, where $S$
stands for sphere.  For example, 
\be
D^2 H[f] (h_1,h_2) 
\ee
would  denote second derivative  along  the direction $h_1,h_2$ at a point $f$ without using any constraint. The constrained derivative would be 
denoted
\be
D^2_S H[f](h_1,h_2). 
\ee

\vspace{5mm}

To compute the mixed  partial derivative of the Hamiltonian  at the critical point $f_m$, with the $L^2-$norm  constraint, let 
\be
f = f_m \sqrt{1-s_1^2-s_2^2} + s_1 f_k + s_2 f_l,
\ee
with $k\neq l$, (with the notation  $g= e^{it\Delta}f, g_m = e^{it\Delta}f_m$) and substitute in 
\be
H = \int \int |e^{it\Delta } f|^6 dx dt. 
\ee
By direct calculations, we obtain
\be
(g_m \sqrt{1-s_1^2-s_2^2} + s_1 g_k + s_2 g_l)^3 (\bar g_m \sqrt{1-s_1^2-s_2^2} + s_1\bar  g_k + s_2 \bar g_l)^3=
\ee
\be
s_1s_2 [ 9|g_m|^4 (g_k \bar g_l +c.c.)  + 6 |g_m|^2 (\bar g_m^2 g_k g_l +c.c.)] +..., \nonumber
\ee
and then
\[
 D^2_{S} H[f_m] (f_k, f_l) =  \left .  \frac{\dd^2 H}{\dd s_1 \dd s_2} \right  |_{s_1=s_2=0} (f) = 9 \int \int |e^{it\Delta} f_m|^4(   e^{it\Delta} f_k \, e^{-it\Delta}f_l + c.c.) dx dt  
\]
\be
+ 6 \int \int |e^{it\Delta} f_m|^2 (  (e^{-it\Delta} f_m)^2 e^{it\Delta} f_k \, e^{it\Delta} f_l + c.c. )  dx dt.
\ee
We need to evaluate two integrals
\begin{proposition}\label{p:hessian}
The first  integral
\[
\hspace{-5mm} I_1(k,l,m) =   \int \int |e^{it\Delta} f_m|^4  e^{it\Delta} f_k \, e^{-it\Delta}f_l  dx dt =   
\]
\be
= \delta_{kl}c_m^4 c_k c_l   \int \frac{dt}{1+4t^2} 
\int H_m^4(\xi) H_k^2(\xi) e^{-3\xi^2} d\xi.
\ee

The second integral,
\[
\hspace{-15mm} I_2(k,l,m) =  \int \int |e^{it\Delta} f_m|^2   (e^{-it\Delta} f_m)^2 e^{it\Delta} f_k \, e^{it\Delta} f_l dx dt =  
\]
\be
= c_m^4 c_k c_l
\int \frac{dt}{1+4t^2}  \int H_m^4(\xi) H_k(\xi)H_l(\xi) e^{-3\xi^2} d\xi,
\ee
if $k+l = 2m$,  and it is equal to zero otherwise. 
\end{proposition}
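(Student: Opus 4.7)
\medskip

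\noindent\textbf{Proof plan for Proposition \ref{p:hessian}.} The strategy is a direct calculation that mirrors the one already performed for the numerator of the Strichartz functional itself: substitute the explicit Hermite-basis form of the Schr\"odinger evolution, collect amplitudes, phase factors, Gaussians and Hermite polynomials, rescale the spatial variable so that space and time decouple, and then invoke Lemma \ref{l:time_integral} to see that the time integral selects a single resonance condition on the indices.

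First I would write, for each $n$,
\be
e^{it\dd_x^2}f_n(x) = \frac{c_n}{\sqrt{1+2it}}\left(\frac{1-2it}{1+2it}\right)^{n/2} H_n(\xi)\,\exp\!\left(-\frac{x^2/2}{1+2it}\right),\qquad \xi=\frac{x}{\sqrt{1+4t^2}},
\ee
and use that $f_m$ is real, so $\overline{e^{it\dd_x^2}f_m}=e^{-it\dd_x^2}f_m$. Multiplying the six relevant factors in each integrand, one bookkeeps the three ingredients separately: (i) the amplitudes $(1\pm 2it)^{-1/2}$ combine to $(1+4t^2)^{-3/2}$ times the constants $c_m^4 c_k c_l$; (ii) the Gaussian exponents sum to $-3x^2/(1+4t^2)$ because each pair of conjugate Gaussians contributes $\mathrm{Re}\,(1+2it)^{-1}=1/(1+4t^2)$; (iii) the phases $\bigl((1-2it)/(1+2it)\bigr)^{\pm n/2}$ collapse to a single factor whose exponent is $r_1=(k-l)/2$ for $I_1$ and $r_2=(k+l-2m)/2$ for $I_2$. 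The Hermite-polynomial product $H_m^4 H_k H_l$ evaluated at the common variable $\xi$ is left untouched.

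Next I would perform the substitution $\xi=x/\sqrt{1+4t^2}$, which gives $dx=\sqrt{1+4t^2}\,d\xi$ and turns the Gaussian into $e^{-3\xi^2}$. One amplitude factor $\sqrt{1+4t^2}$ is absorbed, leaving the clean separated form
\be
I_j(k,l,m)= c_m^4 c_k c_l \left(\int \frac{dt}{1+4t^2}\left(\frac{1-2it}{1+2it}\right)^{r_j}\right)\left(\int H_m^4(\xi)H_k(\xi)H_l(\xi)\,e^{-3\xi^2}\,d\xi\right),
\ee
with $r_1=(k-l)/2$ and $r_2=(k+l-2m)/2$.

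Now the parity of the Hermite integrand: it is nonzero only when $4m+k+l$ is even, i.e.\ when $k+l$ is even, which is precisely the condition guaranteeing that both $r_1$ and $r_2$ are integers. Applying Lemma \ref{l:time_integral}, the time integral vanishes whenever $r_j$ is a nonzero integer and equals $\pi/2$ when $r_j=0$. For $I_1$ this forces $k=l$, giving the stated formula with $H_k^2$ in place of $H_kH_l$; for $I_2$ this forces $k+l=2m$ and leaves the stated formula. There is no serious obstacle—the calculation is essentially the same decoupling miracle already used earlier in Section \ref{sec:grad_flow}—but the careful part is keeping the signs of $t$ straight in $e^{-it\dd_x^2}f_l$ (which flips the phase exponent to $+l/2$) so that the combined phase exponents add up to exactly $r_1$ and $r_2$ as claimed.
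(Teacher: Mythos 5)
Your proposal is correct and is exactly the computation the paper intends: the paper's proof is the one-line remark ``Straightforward computation similar to the previous ones,'' referring to the same decoupling via $\xi=x/\sqrt{1+4t^2}$ and the application of Lemma \ref{l:time_integral} that you carry out, with the resonance exponents $r_1=(k-l)/2$ and $r_2=(k+l-2m)/2$ correctly identified. Your bookkeeping of amplitudes, Gaussian exponents, and phases all checks out, so there is nothing to add.
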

\begin{proof}
Straightforward  computation similar to the previous ones. 
\end{proof}
Evaluating the time integral and observing that only the second integral gives
a non-zero contribution to the off-diagonal elements, we find that nonzero off-diagonal terms are given by
\begin{eqnarray}
 D^2_{S} H[f_m] (f_k, f_l) = 12 I_2(k,l,m) =  12 \cdot \frac{\pi}{2} c_m^4 c_k c_l  \int H_m^4(\xi) H_k(\xi)H_l(\xi) e^{-3\xi^2} d\xi ,
\end{eqnarray}
where $k+l=2m$,  $ k\neq l, k\neq m, l\neq m$ and are equal to zero otherwise. We used that $\int dt/(1+4t^2) =\pi/2.$

\subsection{Real subspace,  Diagonal terms:}

Now, for $k=l$, we have $f = \sqrt{1-s^2} f_m  + s f_k$. Proceeding with similar calculations as above, we obtain
\be
\hspace{-5mm} (\sqrt{1-s^2} g_m + s g_k)^3   (\sqrt{1-s^2} \bar g_m + s \bar g_k)^3    =   s^2 (9|g_m|^4 |g_k|^2 + 3 ( |g_m|^2 g_m^2 \bar g_k^2 + c.c) - 3 |g_m|^6 )+....
\ee
As we know from Proposition \ref{p:hessian}, the second term will integrate to
zero if $k\neq m$, so diagonal terms are given by the first and the third terms
\begin{equation}\label{eq:diagonal_Gaussian_oneD}
D^2_S H[f_m] (f_k, f_k) = 2\cdot 9I_1(k,k,m) - 2\cdot 3 I_1(m,m,m)
\end{equation}
with the factor of 2 coming from differentiating twice  $s^2$.

\subsection{Imaginary subspace, Off-diagonal terms:}

If we next restrict variations to the imaginary subspace we find that the Hessian has a similar form.  Consider
variations around the critical points $f_m$ of the form:
\be
f = f_m \sqrt{1-s_1^2-s_2^2} + i s_1 f_k + i s_2 f_l.
\ee
Then
 \be
\left . \frac{\dd^2 H}{\dd s_1 \dd s_2}  \right |_{s_1=s_2=0}(f) = 9 \int \int |e^{it\Delta} f_m|^4(   e^{it\Delta} f_k \, e^{-it\Delta}f_l + c.c.) dx dt  -
\ee
\[
- 6 \int \int |e^{it\Delta} f_m|^2 (  (e^{-it\Delta} f_m)^2 e^{it\Delta} f_k \, e^{it\Delta} f_l + c.c. )  dx dt,
\]
where $k,l$ can be also equal to $m$. However, it is easy to see that if $k=m$ or $l=m$ but 
$k\neq m$ then all such terms vanish. \\

Hence, 
\be
D^2_S H [f_m] (if_k, i f_l)  = -12 I_2(k,l,m), 
\ee
where $k+l=2m$,  $ k\neq l, k\neq m, l\neq m$ and are equal to zero otherwise.

\subsection{Imaginary subspace, Diagonal terms:}
With 
\be
f = f_m \sqrt{1-s^2} + i s f_k,
\ee
we obtain
\be
 \hspace{-10mm} (\sqrt{1-s^2} g_m + is g_k)^3   (\sqrt{1-s^2} \bar g_m + is \bar g_k)^3    =   s^2 (9|g_m|^4 |g_k|^2 - 3 ( |g_m|^2 g_m^2 \bar g_k^2 + c.c) - 3 |g_m|^6 )+....
\ee
When $k=m$, we obtain zero as expected (invariance with respect to  phase rotation). For the other terms we obtain the same expressions as in the real case
\be
D^2_S H[f_m]  (i f_k, i f_k)=  2 \cdot 9 I_1(k,k,m) - 2\cdot 3 I_1(m,m,m).
\ee

\subsection{Mixed subspace, Variation in real and imaginary directions:}

For the variation in real and imaginary directions
\be
f = f_m \sqrt{1-s_1^2-s_2^2} + i s_1 f_k + s_2 f_l,
\ee
one obtains zero. Indeed,   both terms in the above expansion for second derivatives  become  $iI_1-i\bar I_1$ and $iI_2-i\bar I_2$ and  both of them vanish as $I_1,I_2$ are real.

\subsection{Structure of the Hessian restricted to the  real subspace}

In this section we consider in more detail the structure of the Hessian evaluated at $f_m$, using the form of the
matrix elements in the real and imaginary subspaces computed in the previous section.  Note that since
the off-diagonal matrix element with index $(k,l)$ is zero unless $k+l = 2m$, the 
 real part of the  Hessian consists of the two block matrices. The first one is of size  $2m\times 2m$  with 
nonzero terms only on the diagonal and anti-diagonal. We will denote this block matrix $M_{2m}$. 

The other block matrix is an infinite dimensional diagonal matrix.  Our numerics indicate that all but
possibly a finite number of the diagonal elements of this matrix are negative.
 
Regarding the Hessian restricted to the imaginary subspace, the diagonal elements are the same as in the real case while the off-diagonal elements have opposite sign. As we observe  below, this sign difference  does not affect the characteristic polynomial. 

The diagonal part of $M_{2m}$ is given by 
\be
a_k = [M_{2m}]_{kk} = 18  I_1(k, k, m) - 6 I_1(m, m, m), k =0, 1, 2, ...,  m-1, m+1, ... 2m
\ee
and anti-diagonal part is given by
\be
b_k = [M_{2m}]_{k,2m-k} = 12   I_2(k, 2m-k, m), k =0, 1, 2, ..., m-1, m+1, ..., 2m.
\ee
For example, for  $m=2$ the matrix takes the form:
\be
M_2=
\begin{bmatrix}
    a_0       &  0      &  0     &  b_0  \\
    0          &   a_1  & b_1  & 0       \\
   0            &  b_3  & a_3  & 0      \\
    b_4       & 0      &   0    &  a_4
\end{bmatrix}
\ee

The determinant of $M_{2m}$  can be factorized as follows
\be
\det M_{2m} = (a_0 a_{2m} - b_0 b_{2m})(a_1 a_{2m-1} - b_1 b_{2m-1}) ...(a_{m-1} a_{m+1}-b_{m-1} b_{m+1})
\ee
and then characteristic polynomial is given by
\be
p(\lambda)=((a_0-\lambda)(a_{2m}-\lambda) - b_0 b_{2m}) ...((a_{m-1}-\lambda) (a_{m+1}-\lambda)-b_{m-1} b_{m+1}).
\ee
\begin{rmk}
Note that anti-diagonal elements enter only in quadratic expressions. Therefore, the characteristic polynomials are essentially the same for the real and imaginary cases. 
The only difference is an extra zero eigenvalue in the imaginary case due to the variation along the  given Hermite mode $f_m$.

\end{rmk}

Since the matrix is symmetric  $(b_i = b_{2m-i})$ each quadratic polynomial has either two real roots or one double zero root.

Consider, $i-$th polynomial
\be
p_i(\lambda)=(a_i-\lambda)(a_{2m-i}-\lambda) - b_i b_{2m-i} = \lambda^2 -(a_i+a_{2m-i})\lambda + a_i a_{2m-i} - b_i b_{2m-i}
\ee
with eigenvalues given by
\be
\lambda_i^{\pm} = \frac{1}{2} ((a_i+a_{2m-i}) \pm \frac 1 2 \sqrt{(a_i+a_{2m-i})^2 - 4 (a_i a_{2m-i}-b_i b_{2m-i})}).
\ee


In the next section  we provide some results of  numerical simulations. \\
 
\noindent
{\bf Conjecture:}
The  Hessian of the $m$-th mode restricted to real subspace has at least $m$ positive eigenvalues. \\

\subsection{Numerical Experiments}

\begin{figure}[ht]
\includegraphics[height=40mm]{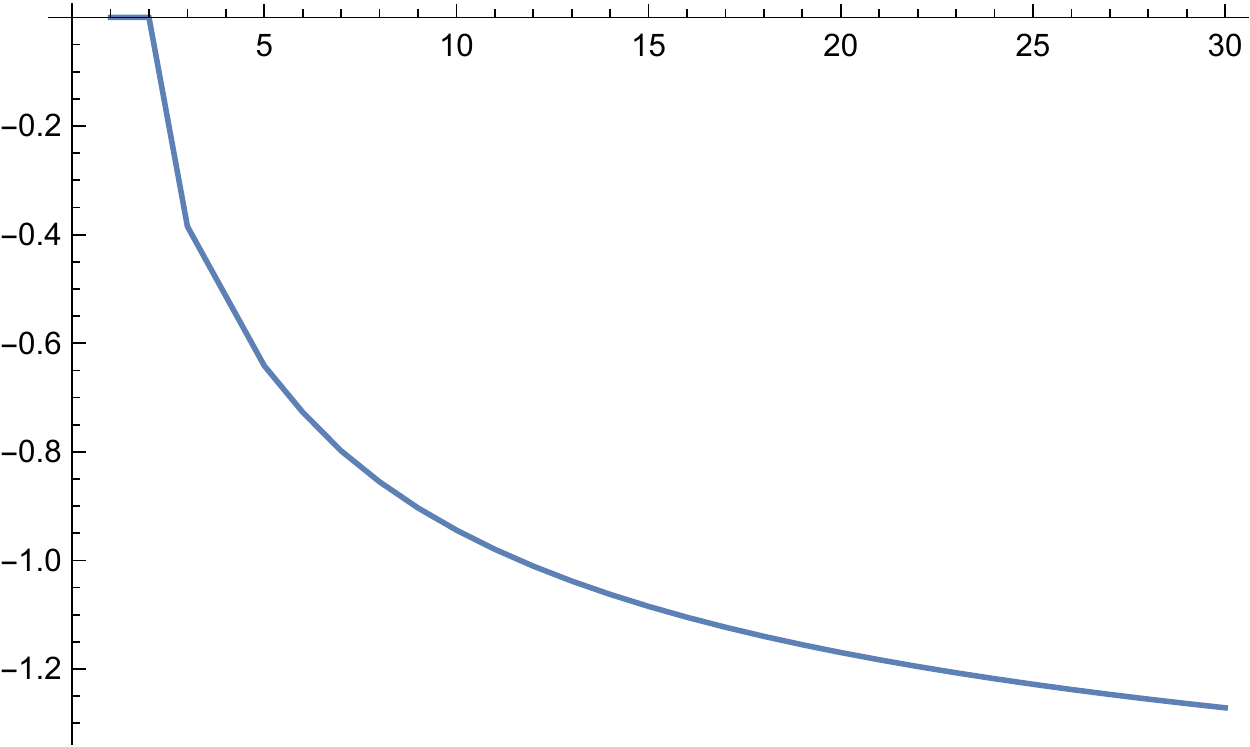}
\caption{Eigenvalues for the Gaussian.}
\label{gausseig}
\end{figure}

We use the above formulas to compute eigenvalues of the Hessian for various Hermite modes. 
\subsubsection{Gaussian: 0-th Hermite mode}\label{s:Gaussian}
First, we compute eigenvalues for Hessian matrix at the ground state mode (Gaussian). As expected, the eigenvalues are nonpositive. There are two zero eigenvalues and all other eigenvalues  are negative, as can be seen in the figure below.  In the next subsection, we demonstrate 
that these zero eigenvalues are related to symmetries of the problem, but first we consider the Hessian
matrix at critical points corresponding to higher Hermite functions.

\subsubsection{Higher modes: 1st Hermite mode}

\begin{figure}[ht]
\includegraphics[height=40mm]{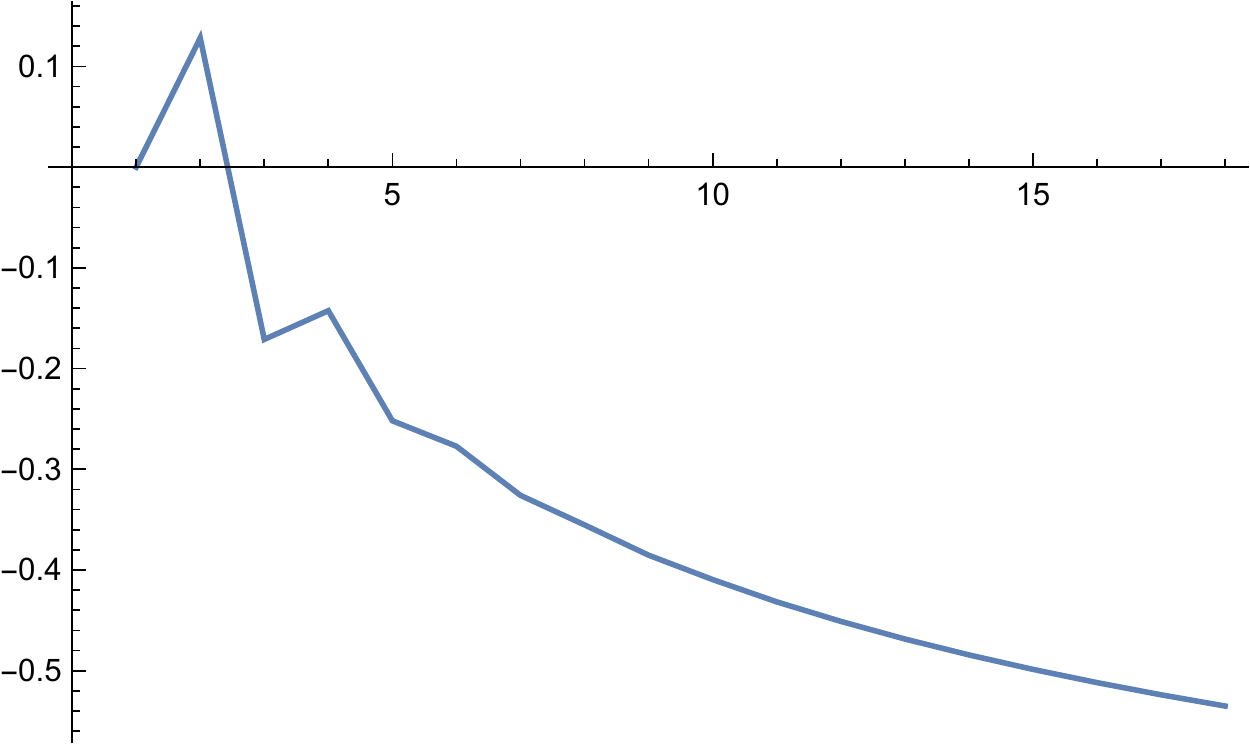}
\caption{Eigenvalues for the first mode.}
\label{1stmode}

\end{figure}

Eigenvalues from the 2 by 2 matrix are given by 
\[
-1.11022*10^{-16}, 1.1547,
\]
where the first number is interpreted as $0$.
The first few  eigenvalues of the complementary submatrix (shown on the figure \ref{1stmode}), containing only diagonal terms are given by
\[
0, 0.1283, -0.171067, -0.142556, -0.251848, -0.277191, ...
\]
with the rest of the eigenvalues appearing to be negative. This, there are 2 positive eigenvalues, 2 zero eigenvalues, with the rest being negative.

\subsubsection{Higher modes: 2nd Hermite mode}

Eigenvalues from the 4 by 4 matrix are given by
\[
1.06917, 0.299367, 2.3239*10^{-10}, 5.57755*10^{-12},
\]
where the last two numbers are interpreted as zeros.
The first 7 eigenvalues of the complementary submatrix containing only diagonal terms are given by
\[
0.114044, 0.0443506, -0.118796, -0.0533264, -0.174391, -0.153076,-0.209375, ..
\]
The next plot shows 30 eigenvalues of that submatrix.

\begin{figure}[ht]
\includegraphics[height=40mm]{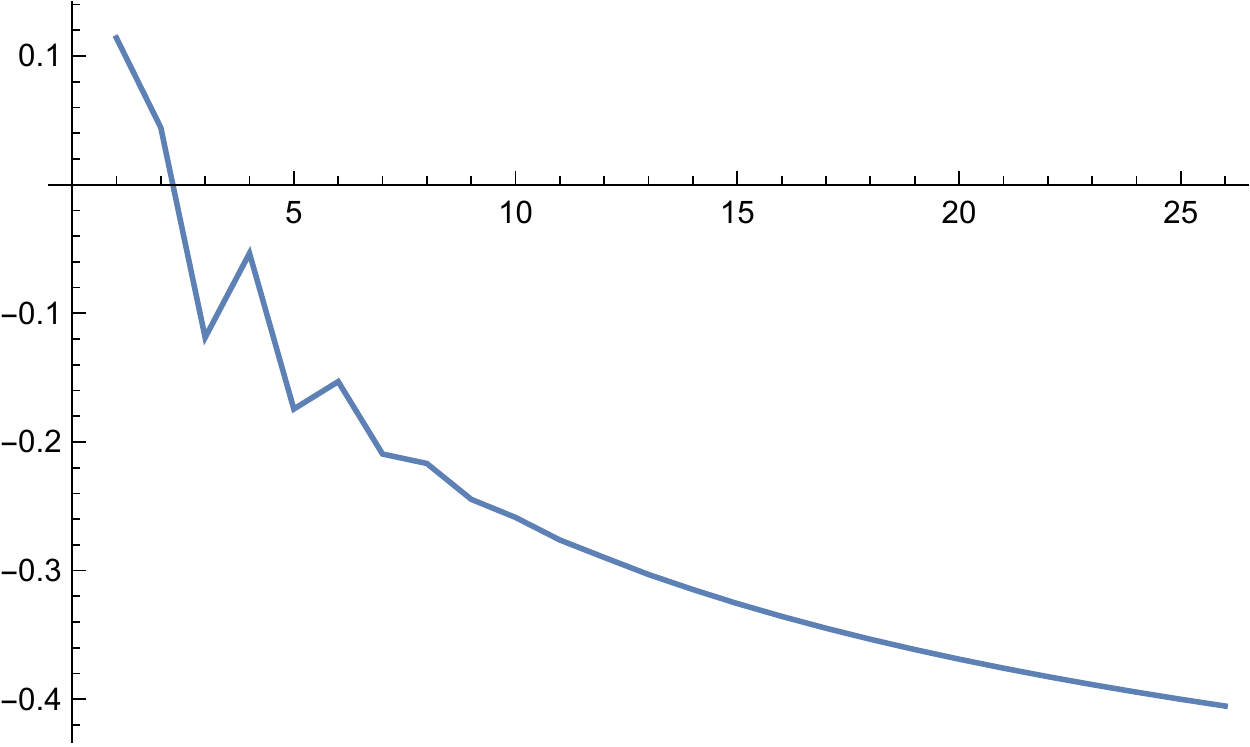}
\caption{Eigenvalues for the 2nd mode Hessian.}
\label{4thmode}
\end{figure}

Thus,  there are 4 positive eigenvalues, 2 zero eigenvalues. All other eigenvalues are negative. 

\subsubsection{Higher modes: 10-th mode Hessian}

In case $m=10$, the corresponding $2m\times 2m$  matrix has 20 eigenvalues, given below

\[
-0.0721553, -0.0607931, -0.0473447, -0.031091, -0.0134169,  -0.0107972, -0.00261104, 
\]
\[
0., 0., 0.00340212, 0.00942436, 0.01268, 0.0156644, 0.0378192, 0.0561792, 0.0731271, 
\]
\[
0.0838498, 0.149501, 0.330481, 0.654569.
\]
There are two zero eigenvalues, 7 negative eigenvalues and 11 positive eigenvalues.
The figure \ref{10mode} shows the behavior of eigenvalues corresponding to the diagonal submatrix. The numerical simulations strongly suggest that all those eigenvalues are negative.

\begin{figure}[ht] 
\includegraphics[height=40mm]{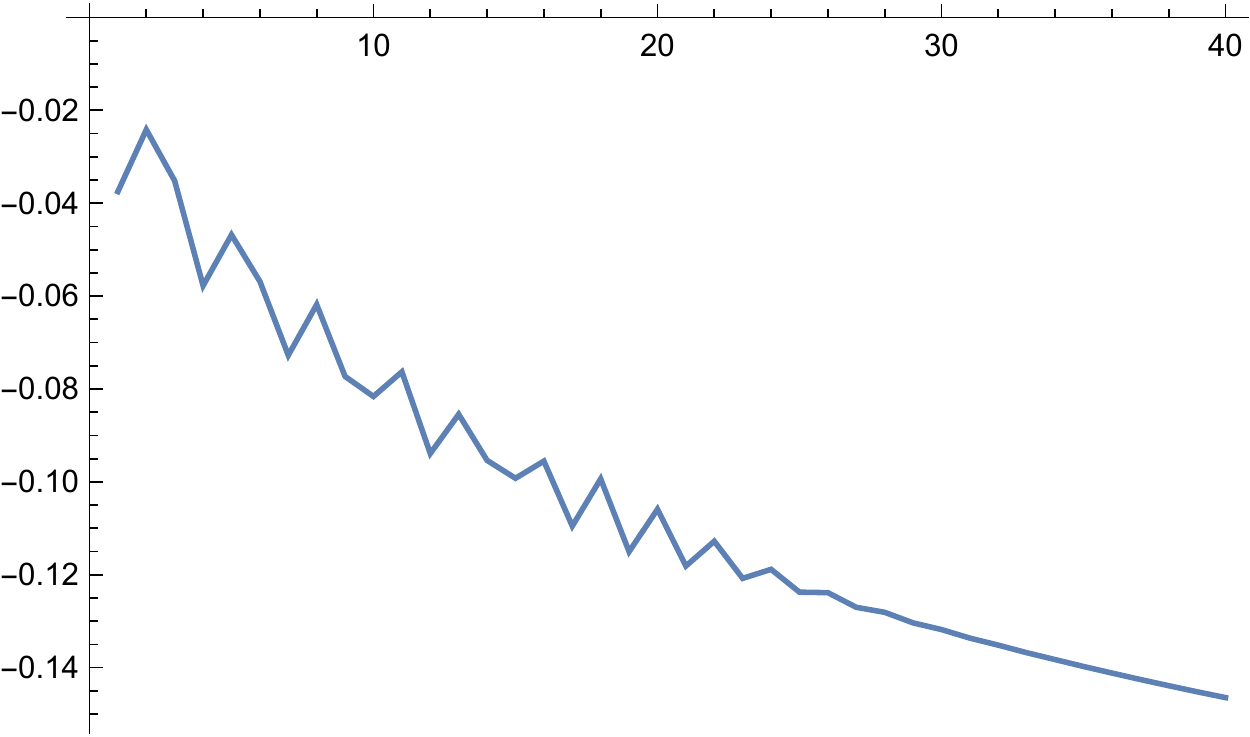}
\caption{Eigenvalues for the 10-th mode Hessian}
\label{10mode}
\end{figure}

Our numerical experiments suggest that the number of positive eigenvalues grows
in a close to linear fashion with $m$.  Recall that for the example of the quantum mechanical
harmonic oscillator which we considered explicitly in Section \ref{s:QM}, we proved that there were
exactly $m$ positive eigenvalues of the Hessian computed at the critical point $h_m$.  That
allowed us to understand the geometry of the gradient flow in that simple example in
terms of connections between the stable and unstable manifolds of various critical points.
While our understanding of the global dynamics of the gradient flow generated by the Strichartz
functional is rudimentary in comparison, these local results give at least a hint of the 
structure of this flow.  However, the increase in the number of positive eigenvalues of the
Hessian matrix at successive critical points is far less regular than in the case of the quantum
mechanical harmonic oscillator.   While our numerics (see figure \ref{onehalf} below) indicate
that as $m$ grows, the number of positive eigenvalues is approximately $m$, there is a
large variation with $m$, particularly for smaller values of $m$.  This suggests that the 
nature of the gradient flow is much more complicated than in the case of the harmonic oscillator.

\begin{figure}[ht]
\includegraphics[height=40mm]{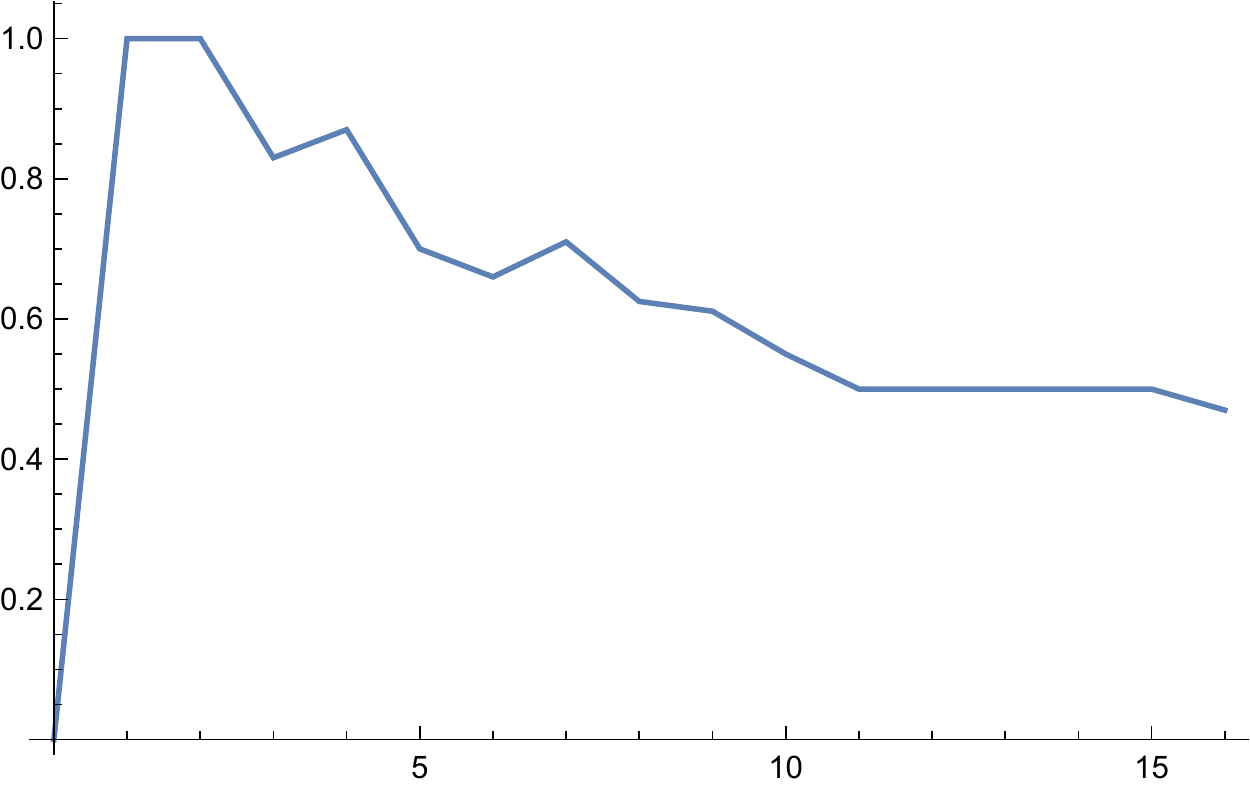}
\caption{The ratio of the number of positive eigenvalues to $2m$.}
\label{onehalf}
\end{figure}

\subsection{Presence of zero eigenvalues due to translation invariance}
\subsubsection{Near Gaussian}
In this section we investigate the relation of zero eigenvalues to the symmetries of the variational problem.

The Hessian of the Hamiltonian computed in the previous sections  contains the matrix element 
\be
\left . \frac{d^2}{ds^2} \right |_{s=0} H[f_0\sqrt{1-s^2}+ s f_1] = D^2 H[f_0] (f_1, f_1 ) + DH[f_0] ( -f_0),
\ee
where $f_0,f_1, ...$ are normalized Hermite functions
\be
f_0= \pi^{-1/4} e^{-x^2/2}, f_1 = \pi^{-1/4} \sqrt{2} x e^{-x^2/2}, f_2 = \pi^{-1/4} (1/\sqrt{2}) (2x^2-1) e^{-x^2/2}.
\ee

 This corresponds to the second variation
about the ground state in the $h_1$ direction.  
Our computations showed that this matrix element  was  zero.  We now verify that
this zero eigenvalue results from the translation 
invariance of the Strichartz functional. Differentiate the Hamiltonian along the $x$ direction
\be
\left . \frac{d^2}{dc^2} \right |_{c=0} H(f_0(x+c)) = D^2 H[f_0] (f'_0, f'_0) + DH[f_0] (f''_0).
\ee
Direct computations show that 
\be
f'_0 = -(1/\sqrt{2})f_1, \,\,\,  f''_0 = 2f_2 - \frac 1 2 f_0.
\ee
Substitute these into the previous expression and  since $DH[f_0](f_2 ) = 0$, we can conclude 
\be
\left . \frac{d^2}{ds^2} \right |_{s=0} H[f_0\sqrt{1-s^2}+ sf_1] = \left . \frac 1 2 \frac{d^2}{dc^2} \right |_{c=0} H[f_0(x+c)] = 0. 
\ee

\subsubsection{Near Hermite functions other than the Gaussian}
Let $f_m$ be the $m-$th order Hermite function and consider
\be
\left . \frac{d^2}{dc^2}\right |_{c=0} H[f_m(x+c)] = D^2H[f_m] (f'_m, f'_m) + DH[f_m] (f''_m).
\ee
Recall  a well known Hermite functions  identity 
\be
f'_m = \sqrt{\frac m 2} f_{m-1} - \sqrt{\frac{m+1}{2}} f_{m+1}, m \geq 2
\ee
and differentiate it twice to obtain
\be
f''_m = - (m+\frac 1 2 ) f_m + f_m^{\perp},
\ee
where $f^{\perp} \in  \{g: (g,f) = 0\}$.

Introduce  normalization of $f'_m$, 
\be
f= \frac{f'_m}{\sqrt{m+1/2}}= \sqrt{\frac{m}{2m+1}} \, f_{m-1}- \sqrt{\frac{m+1}{2m+1}} \, f_{m+1},
\ee
and compute
\be
\left . \frac{d^2}{ds^2} \right |_{s=0} H [\sqrt{1-s^2} \, f_m + s \,  f] = D^2 H[f_m] (f,f) - DH[f_m] (f_m) = 
\ee
\[
 =\frac{1}{m+1/2} D^2 H [f_m] (f'_m,f'_m) - DH[f_m] (f_m) =  
 \]
 \[
 =\frac{1}{m+1/2}    D^2 H [f_m] (f'_m,f'_m) -  (m+1/2) DH[f_m] (f_m) =
 \]
 \[
=  \left . \frac{1}{m+1/2} \, \frac{d^2}{dc^2} \right  |_{c=0}H [f_m(x+c)] = 0.
\]
This strongly suggests that the function $f$ is a zero eigenvector. To prove that this is so, consider an auxiliary  function of two variables
\be
g(s_1,s_2) = H[\sqrt{1-s_1^2-s_2^2} f_m + s_1 f_{m-1} + s_2 f_{m+1}]. 
\ee
The Hessian of $g$ coincides with the central $2\times 2$ block of the $2m\times 2m$ block of the full Hessian.
On the other hand, 
\be
G(t) = g\left (\sqrt{\frac{m}{2m+1}}t, -\sqrt{\frac{m+1}{2m+1}}t \right ) = H[\sqrt{1-t^2} f_m + t f]
\ee
and we already know $G''(0) =0.$ Thus,  the quadratic form corresponding to the Hessian of $g(s_1,s_2)$ 
vanishes along the direction corresponding to $f$ and then $f$ is the zero eigenfunction.  
  
  \subsubsection{Second zero eigenvalue for variations near the Gaussian}
  
 Recall $f_0 = c_0 e^{-x^2/2}$, where $c_0^2 = 1/\sqrt{\pi}$. 
 Define
 \be
 f_c  = e^{ic(4x^2-2)} f_0
 \ee
 
 In \cite{faou}, the authors show that the Strichartz hamiltonian commutes with the flow
 generated by the quantum harmonic oscillator.  (See also discussion at the end of Section  \ref{sec:grad_flow}.)  As a consequence, we have
 $$
 H(f_c) = H(f)\ .
 $$
 
Differentiate this expression with respect to  $c$ and evaluate it at $c=0$.
\be
0= \left . \frac{d^2}{dc^2} \right |_{c=0} H[f_c] = D^2 H[f_0] (f',f') + DH[f_0] (f'' ) = 
\ee
\[
  = D^2 H[f_0] \,\, ( i (4x^2-2)f_0,   i (4x^2-2)f_0 ) +
DH[f_0] ( -(4x^2-2)^2 f_0).
\]
Note that although quadratic form $D^2 H [f] (w,w)$ has terms $ww, w\bar w, \bar w^2$, but as found in the previous sections 
the terms containing $w^2, \bar w^2$ all vanish due to orthogonality relations (assuming $(w,f)=0$). Thus, $i$ in the above expression can be taken out without changing the value.

Next, observe 
\be
(4x^2-2)^2 f_0 = \alpha f_0 + f_0^{\perp} \Rightarrow \alpha = \int (4x^2-2)^2 c_0^2 e^{-x^2} dx = 8.
\ee
Finally, note that $(4x^2-2) f_0= (c_0/c_2) f_2$. Combining these,
we obtain
\be
\hspace{-5mm}0=\left . \frac{d^2}{dc^2}\right  |_{c=0} H[f_c] = \frac{c_0^2}{c_2^2} D^2 H [f_0] (f_2,f_2) - 8 DH[f_0] f_0 = 
8(  D^2H [f_0] (f_2,f_2) -  DH [f_0] (f_0) ),
\ee
  since $c_2^2 = 1/(\sqrt{\pi} 2^2 2!)$. 
  
 The last expression is proportional to
 \be
 \left . \frac{d^2}{ds^2} \right |_{s=0} H[\sqrt{1-s^2}f_0 + s f_2] = D^2 H [f_0] (f_2,f_2) - DH [f_0] (f_0),
 \ee
 which explains the presence of the second zero eigenvalue in the Hessian evaluated at the Gaussian and restricted to real subspace.
  
  \subsubsection{Second zero eigenvalue for variations near Hermite functions other than the Gaussian}
  
 Let $\phi_m(x)$ be a quadratic function to be defined later, 
 
 \be
 f^c(x) = e^{i c \phi_m(x)} f_m, 
 \ee
 
 and consider 
 
 \be
 \left . \frac{d^2}{dc^2} \right  |_{c=0 } H[f^c] = D^2 H [f_m]\, ( i \phi_m(x) f_m,  i \phi_m(x)f_m)) + DH [f_m]  (- \phi_m(x)^2 f_m). 
 \ee
 
 We now use twice the following identity for Hermite polynomials 
 
 \be
 2xH_n = H_{n+1} + 2n H_{n-1}
  \ee
  
  to obtain
  
  \be
  4x^2 H_m = 2x (H_{m+1} + 2m H_{m-1}) =   H_{m+2} + 4m(m-1) H_{m-2} + 2(2m+1) H_m,
  \ee
  
  which implies
  
  \be
 ( 4x^2 -2(2m+1)) H_m = H_{m+2} + 4m(m-1) H_{m-2}. 
 \ee
 
 We will now choose 
 
 \be
 \phi_m(x) = ( 4x^2 -2(2m+1))
 \ee
 
  and define a  function 
  
  \[
  \tilde f_m(x)  =  \phi_m(x) f_m(x).
  \]
  
  Then we have from the above phase  invariance relation
  
  \be
  \hspace{-5mm }0 = D^2 H[f_m] (i\tilde f_m, i\tilde f_m) - DH[f_m] (\phi_m(x)^2 f_m) = D^2 H[f_m] (\tilde f_m, \tilde f_m) - DH [f_m] ( \alpha_m f_m),
  \ee
  
  where we used again  $D^2 H [f_m] (iz, iz)  = D^2 H [f_m] ( z,z) $ if $(f_m,z) =0$ and where
  
 \be
\alpha_m = (\phi_m(x)^2 f_m, f_m) = (\phi_m(x) f_m, \phi_m(x) f_m) = (\tilde f_m,\tilde f_m).
\ee

 Therefore, $\tilde f_m$ is a zero eigenvector,  because then the above expression is 
 proportional to the corresponding term in the Hessian
 
\be
\left . \frac{d^2}{ds^2}\right |_{s=0} H \left [\sqrt{1-s^2} f_m + s \frac{\tilde f_m}{\sqrt{\alpha_m}}\right ] =0.
\ee

%
%
%
%

\section{High-dimensional Strichartz functional}

In dimension $d$, the functional whose critical points we are seeking takes the form,
\be
H(u) = \int_{\R^d} \int_{\R^1} |e^{it\Delta} u|^q dx dt,
\ee
subject to the $L^2$ norm constraint $||u||_{L^2}=C$, where $q=(4/d) +2$. This functional is bounded in $L^2$ which is equivalent to the  Strichartz inequality. We claim that this functional is invariant under Fourier transform in $\R^d$ for any $d$.   We discussed this fact in dimension $d=1$ in Section \ref{sec:grad_flow}, and
it can also be shown by direct calculations in $d=2$. By using a slightly different approach, we get a simple proof  of this fact for any $d$.

\subsection{Convenient representation of Strichartz integral}
Recall that the free Schr\" odinger evolution can be written
\be
e^{it\Delta} u = \frac{1}{(4 \pi i t)^{d/2}}\int_{\R^d} e^{\frac{i|x-y|^2}{4t}} u(y) dy,
\ee
where $x,y \in \R^d$ and $|x|$ is Euclidean norm in $\R^d$. We will denote by $(x,y)$ the inner product in $\R^d$.
Substitute the last expression  in the Strichartz integral to obtain
\begin{eqnarray}
H(u) = \frac{1}{( 4\pi)^{qd/2}}  \int_{\R^1} \int_{\R^d}  \frac{1}{|t|^{qd/2}} \left  |\int_{\R^d} e^{\frac{-i(x,y)}{2t}} e^{\frac{i |y|^2}{4t}} u(y)dy \right |^q dx dt.
\label{eq:ham}
\end{eqnarray}

Now, make the change of variables in the integral
\[
x= \zeta/2\tau, \,\,\, t = 1/4 \tau, \,\, \zeta\in \R^d, x\in \R^d.
\]
The Jacobian of this transformation is: $dxdt = \frac{1}{2^{d+2}|\tau|^{d+2}}d\zeta d\tau$,
so we have
\be
H(u) = \frac{1}{2^{d+2}}   \frac{4^{qd/2}}{(4\pi)^{qd/2}}  \int_{\R^d} \int_{\R^1}  \left  |\int_{\R^d} e^{-i (\zeta,y)} e^{i \tau |y|^2} u(y)dy \right |^q d\zeta  d\tau,
\ee
and then
\be
H(u) =   \frac{1}{(2\pi)^{d+2}}  \int_{\R^d} \int_{\R^1}  \left  |\int_{\R^d} e^{-i (\zeta, y)} e^{i \tau |y|^2} u(y)dy \right |^q d\zeta  d\tau.
\ee

\subsection{Fourier transform}

Now, recall that the Fourier transform in $\R^d$ is defined as:
\be
{\mathcal F}(u) = \frac{1}{(2\pi)^{d/2}} \int_{\R^d} e^{i (y, z)} u(z) dz.
\ee

We have 
\begin{eqnarray}\label{e:Fourier}
&& H({\mathcal F}(u) )=   \\ \nonumber
&& \qquad  = \frac{1}{(2\pi)^{d+2}} \frac{1}{(2\pi)^{qd/2}} \int_{\R^d} \int_{\R^1}  \left  |\int_{\R^d} \int_{\R^d} e^{-i (\zeta,  y)} e^{i \tau |y|^2} 
  e^{i (y, z)} u(z) dz dy \right |^q d\zeta  d\tau,
\end{eqnarray}

Now, evaluate the integral over $y$,  inside  $|*|$: 
\be
\int_{\R^d} e^{-i (\zeta, y)} e^{i \tau |y|^2}   e^{i (y,z)}  dy = \int_{\R^d}  e^{i\tau |y+\frac{z-\zeta}{2\tau}|^2} e^{-i\tau \frac{|z-\zeta|^2}{4\tau^2}} dy =
\frac{K}{\tau^{d/2}} e^{-i \frac{|z-\zeta|^2}{4\tau}},
\ee
where $K= (1+i)^{d}(\pi/2)^{d/2}$. Note that $|K|= \pi^{d/2}$.\\

\begin{rmk}  Note that the integral in the previous equality is not absolutely convergent and hence
the interchange of the order of the $z$ and $y$ integrals in \eqref{e:Fourier} is not justified
by Fubini's theorem.  We can get around this problem by a standard trick of rewriting
\begin{eqnarray}
&& \int_{\R^d} \int_{\R^d} e^{-i (\zeta,  y)} e^{i \tau |y|^2} 
  e^{i (y, z)} u(z) dz dy \ \\  \nonumber
  && \quad =   \lim_{\epsilon \to 0} \int_{\R^d} \int_{\R^d} e^{-\epsilon |y|^2}
   e^{-i (\zeta,  y)} e^{i \tau |y|^2} 
  e^{i (y, z)} u(z) dz dy \ \\ \nonumber
 && \qquad \qquad  =  \lim_{\epsilon \to 0} \int_{\R^d} \int_{\R^d} e^{-\epsilon |y|^2}
   e^{-i (\zeta,  y)} e^{i \tau |y|^2} 
  e^{i (y, z)} u(z) dy dz
\end{eqnarray}
and then proceeding to evaluate the integral over $y$ as above, taking the limit $\epsilon \to 0$
after evaluating the integral.  This leads to the same result as the computation above.
\end{rmk}

Finally, we obtain
\be
H({\mathcal F}(u)) =    \frac{\pi^{d+2}}{(2\pi)^{d+2} (2\pi)^{qd/2}}   \int_{\R^d} \int_{\R^1} \frac{1}{|\tau|^{qd/2}}  \left  |\int_{\R^d} \int_{\R^d}   e^{-i \frac{|z-\zeta|^2}{4\tau}} u(z) dz \right |^q d\zeta  d\tau,
\ee
or equivalently
\be
H({\mathcal F}(u)) =    \frac{1}{ (4\pi)^{qd/2}}   \int_{\R^d} \int_{\R^1} \frac{1}{|\tau|^{qd/2}}  \left  |\int_{\R^d} \int_{\R^d}   e^{-i \frac{|z|^2}{4\tau}} e^{i \frac{ (z,\zeta)}{2\tau}} u(z) dz \right |^q d\zeta  d\tau,
\ee

which is equal to \eqref{eq:ham}.  The exponents inside the integral have the wrong signs but it is easy to check that it does not affect the value.

\begin{rmk}
In Section \ref{sec:grad_flow},  we showed that the flow generated by the quantum mechanical oscillator
commutes with the Hamiltonian flow generated by the Strichartz functional in one dimension.
This had previously been proven in dimension two by  Faou et. al. in \cite{faou}.
By extending their argument, one  can show that the Hamiltonian flow commutes 
with the flow of quantum harmonic oscillator in all dimensions.  
\be
\{ H, |\nabla u|^2 + |x|^2 |u|^2 \}  = 0.
\ee
This also means that these operators share the same eigenspaces.  However, we won't use
that result in what follows, so we don't pursue this point further.

\end{rmk}

\section{Local structure of the Strichartz functional near Gaussian}
The goal of this section  is to study the Strichartz functional in the vicinity of the Gaussian. Recall that in dimension 3 and higher it is unknown if Gaussian is a minimizer.   In this section we first prove that the Gaussian is a critical point of the Strichartz gradient flow in any dimension, and then we present evidence, partly numerical and partly theoretical, that it is at least a local minimizer.
Recall that by Lemma \ref{lem:constrained}, a function is a critical point of the Strichartz 
gradient flow if and only if it is a critical point of the Strichartz Hamiltonian $H$, under
variations which conserve the $L^2$ norm.

\subsection{First variation}
Here, we verify that the first variation of the Strichartz Hamiltonian vanishes at the Gaussian under
variations that conserve norm.
We denote by $f_k$ normalized Hermite
functions in dimension $d$
\be
f_k(x) = c_k H_k(x) e^{-|x|^2/2} = c_{k_1 ... k_d}H_{k_1}(x_1) \cdots H_{k_d}(x_d) \, e^{-\frac 1 2 (x_1^2 + \cdots + x_d^2)}.
\ee

Let
\be
f(s) = f_0 \sqrt{1-s^2} + f_k s \,\, {\rm and } \,\, 
g(s,t) = e^{it\Delta} f(s)
\ee
and compute 
\[
\left . \frac{d}{ds}  \right  |_{s=0} H(f(s)). 
\]

Note first that 
\be
\frac{d}{ds} |g|^q = \frac{d}{ds} \, (g^{q/2} \bar g^{q/2} ) = \frac{q}{2} \left (\frac{\partial_s g}{g} + \frac{\overline{ \partial_s g} }{\bar g}  \right ) |g|^q,
 \ee
then, we have
\[
\frac{d}{ds} H(f(s))|_{s=0}  = \int \int \frac{d}{ds} |g(t,s)|^q|_{s=0} dx dt = 
\frac{q}{2} \int \int  \left (\frac{(\partial_s g)(t,0)}{g(t,0)} + \frac{\overline{(\partial_s g)}(t,0)}{\bar g(t,0)}  \right ) |g(t,0)|^q dx dt =
\]
\be
=\frac{q}{2} \int \int  \left (\frac{e^{it\Delta} f_k}{e^{it\Delta}f_0} + c.c.  \right ) |e^{it\Delta} f_0|^q dx dt.
\ee
Recall
\be
e^{it\Delta} f_k = \frac{1}{(1+i 2 t)^{d/2}}  \prod_{j=1}^d \left ( \frac{1-i2t}{1+i2t} \right )^{k_j/2} 
c_{k_j}H_{k_j} \left (\frac{x_j}{\sqrt{1+4t^2}} \right )
\exp \left (\frac{-|x|^2/2}{1+i2t} \right )
\ee
with
\be
(e^{it\Delta} f_0) = \frac{c_0}{(1+i 2 t)^{d/2}}  \exp \left (\frac{-|x|^2/2}{1+i2t} \right ),
\ee
where $k=(k_1, k_2, ..., k_d)$ and $|k|= \sum k_j$ and $c_j$ are normalizing constants.

Then,
\be
\frac{e^{it\Delta} f_k}{e^{it\Delta}f_0} = \frac{c_{k_1} c_{k_2} ... c_{k_d}}{c_0} \left ( \frac{1-i2t}{1+i2t} \right )^{|k|/2} 
\prod_{j=1}^d  H_{k_j} \left (\frac{x_j}{\sqrt{1+4t^2}} \right )
\ee
and
\be
|e^{it\Delta} f_0|^q = \frac{c_0^q}{|1+ 4t^2|^{qd/4}}  \exp \left (\frac{-q|x|^2/2}{1+4t^2} \right ).
\ee


We now prove that the first variation vanishes at the Gaussian. Ignoring insignificant
constants, the first derivative takes the form
\be
\hspace{-10mm} \left . \frac{d}{ds} H(f(s)) \right |_{s=0} =    2 \Re \int \int \frac{dx dt}{|1+ 4t^2|^{qd/4}}  \exp \left (\frac{-q|x|^2/2}{1+4t^2} \right ) \times
\left ( \frac{1-i2t}{1+i2t} \right )^{|k|/2} 
\prod_{j=1}^d  H_{k_j} \left (\frac{x_j}{\sqrt{1+4t^2}} \right ).
\ee
Now make the change of variables introduced in Section \ref{sec:grad_flow}
to separate the time and space integrals:

\[
\xi_j = x_j/\sqrt{1+4t^2},  \,\,\, T=t,
\]
This gives
\begin{eqnarray}\nonumber 
\left . \frac{d}{ds} H(f(s)) \right |_{s=0}  &=&  2 \Re \int  \left ( \frac{1-i2T}{1+i2T} \right )^{|k|/2} \frac{(1+4T^2)^{d/2} }{(1+ 4T^2)^{qd/4}}  dT \int  \exp{(-q |\xi|^2/2)} \prod_{j=1}^d  H_{k_j} (\xi_j) d\xi \\
& = & 
2 \Re \int  \left ( \frac{1-i2T}{1+i2T} \right )^{|k|/2} \frac{dT}{1+4T^2} \,\,\, \prod_{j=1}^d  \int e^{-q\xi_j^2/2}H_{k_j} (\xi_j) d\xi.
\end{eqnarray}

First note that by construction $k \ne 0$.   The time integral vanishes if $|k|$ is even and nonzero by Lemma \ref{l:time_integral}.  On the other hand, if $|k|$ is odd, at least one $k_j$ is odd, but then the corresponding  space integral will vanish by symmetry.  Thus, we have demonstrated that
the first variation of the Strichartz Hamiltonian vanishes at the Gaussian in any dimension.

\subsection{Second variation}
\subsubsection{Off-diagonal terms in the subspace of real variations.}
Let 
\be
f(s) = f_0 \sqrt{1-s_1^2 -s_2^2} + f_{k} s_1 + f_{l} s_2, \,\,\, g(s,t) = e^{it\Delta} f(s)
\ee
be the deformation of Gaussian in the direction of the Hermite
functions $f_k$, $f_l$,  with $k\neq l$ and let $g$ be the corresponding  Schr\" odinger evolution. 
\begin{rmk}
 We will also need to compute the variation in all the directions in complex space, i.e. 
\begin{eqnarray}
f(s) = f_0 \sqrt{1-s_1^2 -s_2^2} + i f_{k} s_1 + i f_{l} s_2  \nonumber \\
f(s) = f_0 \sqrt{1-s_1^2 -s_2^2} + i f_{k} s_1 + f_{l} s_2 \\
f(s) = f_0 \sqrt{1-s_1^2 -s_2^2} + f_{k} s_1 + i f_{l} s_2, \nonumber
 \end{eqnarray}
 including the Gaussian $if_0$.
 We will see that mixed derivatives (corresponding to the 2nd and 3rd lines above) vanish and that
 variations in the purely imaginary subspace (1st line) are essentially the same as the real one.
\end{rmk}

We want to compute 
\be
\left . \frac{\partial^2 H(f(s))}{\partial  s_1 \partial s_2} \right |_{s_1=s_2=0}
\ee
but first for convenience we  evaluate
\[
\frac{\partial^2 }{\partial s_1 \partial s_2} (g^{q/2} \bar g^{q/2}) = \frac q 2 \partial_{s_2} \left (|g|^q   \left (\frac{\partial_{s_1} g}{g} +  \frac{\partial_{s_1} \bar g}{\bar g} \right )\right ) = \frac{q^2}{4} |g|^q   \left (\frac{\partial_{s_1} g}{g} +  \frac{\partial_{s_1} \bar g}{\bar g} \right )   \left (\frac{\partial_{s_2} g}{g} +  \frac{\partial_{s_2} \bar g}{\bar g} \right ) +
\]
\be
\frac q 2 |g|^q \partial_{s_2}   \left (\frac{\partial_{s_1} g}{g} +  \frac{\partial_{s_1} \bar g}{\bar g} \right ).
\ee
Note that $\partial_{s_1} \partial_{s_2} g(s)|_{s_1,s_2=0} = 0$, therefore we only need to keep terms where $g$ is differentiated once, so that
\[
\partial_{s_2}   \left (\frac{\partial_{s_1} g}{g} +  \frac{\partial_{s_1} \bar g}{\bar g} \right )= -\frac{\partial_{s_1} g \partial_{s_2} g}{g^2} - \frac{\partial_{s_1} \bar g \partial_{s_2} \bar g}{\bar g^2} + ...
\]
Next, evaluating at $s_1=s_2=0$ and integrating,  we obtain
\begin{eqnarray}\label{eq:second_var}
&& \left . \frac{\partial^2 H(f(s))}{\partial s_1 \partial s_2} \right |_{s_1=s_2=0} = \int \int \frac{\partial^2 }{\partial s_1 \partial s_2} (g^{q/2} \bar g^{q/2})(0) dx dt = \\ \nonumber
&& \qquad  
\int \int \frac{q^2}{4} |e^{it\Delta} f_0|^q \left (\frac{e^{it\Delta}f_{k}}{ e^{it\Delta}f_0}+c.c.\right )
\left (\frac{ e^{it\Delta}  f_{l}}{ e^{it\Delta} f_0}+c.c. \right )-
\frac q 2  |e^{it\Delta} f_0|^q  \left (\frac{ e^{it\Delta}   f_{k} e^{it\Delta}  f_{l}}{e^{it\Delta} f_0  e^{it\Delta} f_0 } + c.c. \right ) dx dt.
\end{eqnarray}

\vspace{5mm}

Note that all six of the terms that survive after we set $s_1=s_2=0$ are of one of the two types
that appear in the following proposition (or else a complex conjugate of one of these two.)

\begin{proposition}
For any $(k,l) \neq (0,0)$ 
\be
I^+(k,l,q) = \int \int  |e^{it\Delta} f_0|^q  \,\, \frac{e^{it\Delta}f_{k}}{ e^{it\Delta}f_0}  \, \frac{ e^{it\Delta}  f_{l}}{ e^{it\Delta} f_0} \, dx dt = 0.
\ee
For $|k|=|l|$
\be
\hspace{-5mm} I^-(k,l,q)=\int \int  |e^{it\Delta} f_0|^q  \,\, \frac{e^{it\Delta}f_{k}}{ e^{it\Delta}f_0}  \, \frac{ e^{-it\Delta}  f_{l}}{ e^{-it\Delta} f_0} \, dx dt = 
\frac{\pi}{2} c_0^{qd}  \frac{c_k c_l}{c_0^{2d}} \prod_{j=1}^d \int  \exp \left (-q \xi_j^2/2\right )
  H_{k_j}(\xi_j)  H_{l_j} (\xi_j) d\xi_j,
\ee
and $I^-(k,l,q)=0$ otherwise.
\end{proposition}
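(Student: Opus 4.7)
The plan is a direct computation modeled on the corresponding one-dimensional calculation carried out earlier in the paper. First I would substitute into $I^\pm$ the explicit formulas recalled just above the proposition:
\be
e^{it\Delta}f_k = \frac{c_k}{(1+i2t)^{d/2}} \prod_{j=1}^{d}\left(\frac{1-i2t}{1+i2t}\right)^{k_j/2} H_{k_j}(\xi_j) \exp\!\left(\frac{-|x|^2/2}{1+i2t}\right),
\ee
and the analogous one for $f_0$. The key simplification is that when we form the ratios $e^{it\Delta}f_k/e^{it\Delta}f_0$ and $e^{\pm it\Delta}f_l/e^{\pm it\Delta}f_0$, the $(1\pm i2t)^{d/2}$ prefactors and the complex Gaussian phases cancel exactly, leaving only a global phase and a product of Hermite polynomials in $\xi_j=x_j/\sqrt{1+4t^2}$. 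Combining with
\be
|e^{it\Delta}f_0|^q = \frac{c_0^{qd}}{(1+4t^2)^{qd/4}} \exp\!\left(\frac{-q|x|^2/2}{1+4t^2}\right),
\ee
the integrand becomes a product of a real Gaussian in $x$ (with $t$-dependent variance) and polynomials in $\xi_j$, multiplied by an overall phase $\bigl((1-i2t)/(1+i2t)\bigr)^{r}$ where $r=(|k|+|l|)/2$ for $I^+$ and $r=(|k|-|l|)/2$ for $I^-$.

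Next I would carry out the change of variables $\xi_j=x_j/\sqrt{1+4t^2}$, $T=t$, which produces a Jacobian $(1+4T^2)^{d/2}$. Since $q=(4/d)+2$, the powers combine as $(1+4T^2)^{d/2-qd/4}=(1+4T^2)^{-1}$, exactly as in the one-dimensional case. The space and time integrals then decouple completely: the space factor is the product over $j$ of $\int H_{k_j}(\xi_j)H_{l_j}(\xi_j)e^{-q\xi_j^2/2}d\xi_j$, and the time factor is $\int (1+4T^2)^{-1}\bigl((1-i2T)/(1+i2T)\bigr)^{r}dT$, to which Lemma \ref{l:time_integral} applies.

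For $I^+$ I would split into cases by the parity of $|k|+|l|$. If $|k|+|l|$ is odd, at least one $k_j+l_j$ is odd, so the corresponding one-dimensional space integral vanishes by the symmetry $\xi_j\to-\xi_j$. If $|k|+|l|$ is even, then $r=(|k|+|l|)/2$ is an integer, and since $(k,l)\ne(0,0)$ it is a nonzero integer, so by the remark following Lemma \ref{l:time_integral} the time integral vanishes. Either way $I^+=0$. For $I^-$ the parity discussion is identical (note $|k|-|l|$ and $|k|+|l|$ have the same parity), so $I^-=0$ when $|k|+|l|$ is odd or when $|k|\ne|l|$ with $|k|+|l|$ even. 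In the remaining case $|k|=|l|$ one has $r=0$, the time integral evaluates to $\pi/2$, and gathering the constants yields exactly the claimed formula.

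I do not expect a serious obstacle here; the calculation is essentially a $d$-dimensional repetition of the bookkeeping already performed in Section \ref{sec:grad_flow}. The main point requiring care is keeping the complex-conjugate structure straight so that the phase exponent becomes $(|k|-|l|)/2$ in $I^-$ rather than $(|k|+|l|)/2$, and tracking the normalization constants $c_0^{qd}$ and $c_kc_l/c_0^{2d}$ through the cancellations so that the prefactor in the statement is reproduced exactly.
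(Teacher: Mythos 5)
Your proposal is correct and follows essentially the same route as the paper's proof: substitute the explicit Hermite evolution formulas, change variables to decouple the space and time integrals, and combine the parity of the one-dimensional Hermite integrals with Lemma \ref{l:time_integral} on the time integral. Your explicit observation that $(k,l)\neq(0,0)$ forces $r=(|k|+|l|)/2$ to be a \emph{nonzero} integer in the $I^+$ case is a small point the paper's write-up glosses over, and is worth keeping.
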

\begin{proof} Consider the first integral:

\begin{eqnarray} \nonumber
&& I^+(k,l,q) = \int \int  |e^{it\Delta} f_0|^q  \,\, \frac{e^{it\Delta}f_{k}}{ e^{it\Delta}f_0}  \, \frac{ e^{it\Delta}  f_{l}}{ e^{it\Delta} f_0} \, dx dt \\ \nonumber  && \quad
= \int \int \frac{c_0^{qd}}{|1+ 4t^2|^{qd/4}}  \exp \left (\frac{-q|x|^2/2}{1+4t^2} \right )  \frac{c_{k_1} c_{k_2} ... c_{k_d}}{c_0^d} \left ( \frac{1-i2t}{1+i2t} \right )^{|k|/2} 
\prod_{j=1}^d  H_{k_j} \left (\frac{x}{\sqrt{1+4t^2}} \right ) \\ 
&& \qquad \qquad  \qquad 
 \times  \frac{c_{l_1} c_{l_2} ... c_{l_d}}{c_0^d} \left ( \frac{1-i2t}{1+i2t} \right )^{|l|/2} 
\prod_{j=1}^d  H_{l_j} \left (\frac{x}{\sqrt{1+4t^2}} \right ) \ .
\end{eqnarray}

If we denote $c_k=c_{k_1} \dots c_{k_d}$, $c_{l}= c_{l_1}\dots c_{l_d}$ and make the
same change of variables used above to separate the time and space integrals, we obtain.
\be
I^+(k,l,q) = c_0^{qd} \frac{c_k c_l}{c_0^{2d}} \int  \frac{dT}{1+4T^2}   \left ( \frac{1-i2T}{1+i2T} \right )^{(|k|+|l|)/2}  \times \prod_{j=1}^d \int  \exp \left (-q \xi_j^2/2\right )
  H_{k_j}(\xi_j)  H_{l_j} (\xi_j) d\xi_j.
\ee
Once again, we note that the integral over $\xi$ will vanish unless all $k_j$ and $l_j$  have the same parity.
But then $(|k|+|l|)/2$ is an integer and the temporal integral vanishes by Lemma \ref{l:time_integral}.

Now we consider the second integral.
Decoupling space and time as above, one can rewrite this integral as follows:
\be
I^-(k,l,q)= \int \int  |e^{it\Delta} f_0|^q  \,\, \frac{e^{it\Delta}f_{k}}{ e^{it\Delta}f_0}  \, \frac{ e^{-it\Delta}  f_{l}}{ e^{-it\Delta} f_0} \, dx dt = 
\ee
\[ = c_0^{qd} \frac{c_k c_l}{c_0^{2d}} \int  \frac{dT}{1+4T^2}   \left ( \frac{1-i2T}{1+i2T} \right )^{(|k|-|l|)/2}  \times \prod_{j=1}^d \int  \exp \left (-q \xi_j^2/2\right )
  H_{k_j}(\xi_j)  H_{l_j} (\xi_j) d\xi_j.
\]
The integral is real valued since the integrand in the time integral is transformed
into its complex conjugate if $T$ changes sign.  The space integrals vanish  if 
at least one pair of $k_j,l_j$ have different parity. If all of them have the same parity, then 
$|k|-|l|$ is even and the time integral vanishes unless $|k|=|l|.$
Therefore,
\begin{equation}
I^-(k,l,q) = \frac{\pi}{2} c_0^{qd}\cdot  \frac{c_k c_l}{c_0^{2d}} \prod_{j=1}^d \int  \exp \left (-q \xi_j^2/2\right )
  H_{k_j}(\xi_j)  H_{l_j} (\xi_j) d\xi_j, 
\end{equation}
if $|k|-|l|=0$, otherwise $I^-(k,l,q)=0$.  
\end{proof}

Finally, using these to reexpress the second variation
integral in \eqref{eq:second_var}, we find that the off-diagonal matrix elements in the real subspace 
satisfy
\be
\frac{\partial^2 H(f(0))}{\partial s_1 \partial s_2} = 
 \frac{q^2}{4} (I^+ + I^- + \bar I^- + \bar I^+) - \frac q 2 (I^+ + \bar I^+) =
 \frac{q^2}{2} I^-(k,l,q).
\ee

\subsubsection{Diagonal terms of the Hessian, restricted to the subspace of real variations.}

Now let $g(s) = \sqrt{1-s^2} g_0 + s g_k$. Then $g'(0) = g_k$ and $g''(0) = -g_0$. Now compute
\be
\frac{d^2}{ds^2} |g|^q = \frac q 2 \frac{d}{ds} \left (|g|^q \left  (\frac{ g'}{g}+ \frac{ \bar g'}{\bar g}\right )\right ) = \frac{q^2}{4}|g|^q  \left (\frac{ g'}{g}+ \frac{ \bar g'}{\bar g}\right )^2 + \frac q 2 |g|^q \left (\frac{ g''}{g} - \frac{g'  g'}{g^2} + c.c. \right ).
\ee
Then 
\begin{eqnarray}
&& \left . \frac{d^2 }{d s^2}  H(f(s)) \right |_{s_1=s_2=0}  = \frac{q^2}{4} \int \int |e^{it\Delta} f_0|^q  \left (  \frac{e^{it\Delta} f_k}{e^{it\Delta} f_0} + c.c.\right )^2 dx dt \\ \nonumber && \qquad \qquad \qquad 
- \frac q 2 \int \int |e^{it\Delta} f_0|^q \left ( 1 + \left (\frac{e^{it\Delta} f_k}{e^{it\Delta}f_0} \right )^2  + c.c. \right ) dx dt.
\end{eqnarray}
As in the off-diagonal terms, contributions proportional to $I^+(k,k,q)$ vanish, and we are left with
\begin{eqnarray}\nonumber
&& \left . \frac{d^2 }{d s^2} H(f(s)) \right |_{s_1=s_2=0}  = \frac{q^2}{2}\int \int  |e^{it\Delta} f_0|^q \left | \frac{e^{it\Delta} f_k}{e^{it\Delta} f_0} \right |^2 dx dt - 
q \int \int |e^{it\Delta} f_0|^q dx dt \\ && \qquad \qquad \qquad  = \frac{q^2}{2} I^-(k,k,q) -q I^- (0,0,q).
\end{eqnarray}
Thus, we obtain Hessian restricted to real subspace
\begin{equation}\label{eq:Hessian_diagonal_dimd}
{\mathfrak{H}}^{\mathbb R}_{kl} = \frac{q^2}{2} I^-(k,l,q)- \delta_{kl} q I^-(0,0,q), k \geq 1, l\geq 1,
\end{equation}
where $k= (k_1, k_2, ..., k_d), l=(l_1,l_2, ..., l_d), \delta_{kl}= \delta_{k_1 l_1}  \delta_{k_2l_2}... \delta_{k_d l_d}$.
The first matrix is positive definite as it can be represented as a Gram matrix (see below). The second matrix  is diagonal proportional to the identity matrix. In the dimensions one and two, 
we already know that the full matrix is nonpositive and we expect that the same is true in higher dimensions. \\
\begin{rmk}
As a quick check, we relate these calculations of the Hessian at the Gaussian in arbitrary dimension, to the specifically one-dimensional calculations of Section \ref{s:oneDHessian}.  Note that in general dimensions, we have off-diagonal, $(k,l)$ entry in the Hessian is non-zero only if $|k|=|l|$.  (Recall that $k$ and $l$ are d-dimensional vectors with non-negative, integer entries.)  However, in one-dimension, there are no off-diagonal entries of this type and this is in agreement with our calculation that showed that the Hessian was diagonal in this case.  Turning to the diagonal entries, recall that in one-dimension, $q=6$.  From equation \eqref{eq:diagonal_Gaussian_oneD}, we found that the second variation about the Gaussian in the (real) direction $f_k$, was given by
$$
2\cdot 9I_1(k,k,0) - 2\cdot 3 I_1(0,0,0)
$$
Comparing the definitions of $I_1$ and $I_2$ with the definition of $I^{\pm}$, this
becomes
$$
18 I^{-}(k,k,6) - 6 I^{-}(0,0,6)
$$
which agrees with the expression in \eqref{eq:Hessian_diagonal_dimd}.
\end{rmk}



\subsubsection{Imaginary subspace. Off diagonal entries.}
We now consider variations about the Gaussian subspace, beginning as before with the off-diagonal terms.
 For the purely imaginary case $(if_k,if_l)$, we have

\begin{equation}
\left . \frac{\partial^2 H(f(s))}{\partial  s_1 \partial s_2} \right |_{s_1=s_2=0} =
\end{equation}
\[
\int \int \frac{q^2}{4} |e^{it\Delta} f_0|^q \left ( i \frac{e^{it\Delta}f_{k}}{ e^{it\Delta}f_0} + c.c.\right )
\left (i \frac{ e^{it\Delta}  f_{l}}{ e^{it\Delta} f_0}+c.c. \right )-
\frac q 2  |e^{it\Delta} f_0|^q  \left ( i^2 \frac{ e^{it\Delta}   f_{k} e^{it\Delta}  f_{l}}{e^{it\Delta} f_0  e^{it\Delta} f_0 } + c.c. \right ) dx dt = 
\]
\[
= \frac{q^2}{4}(- I^+ - \bar I^+ + I^- + \bar I^-) - \frac q 2 (- I^+ - \bar I^+) = \frac{q^2}{2} I^-(k,l,q),
\]
which is the same expression as for the diagonal terms in the real subspace. Note that off diagonal terms involving the zero mode do not appear due to the fact that we consider only variations that preserve norm.

\subsubsection{Imaginary subspace. Diagonal terms.}
A similar calculation as above with $g_0$ deformed in the imaginary direction 
\[
g(s) = \sqrt{1-s^2} g_0 + is g_k, g(0) = g_0, g'(0)= ig_k, g''(0)=-g_0
\]

\be
\left . \frac{d^2}{d s^2}\right |_{s=0} |g|^q = \frac{q^2}{2} |g_0|^q \left (i \frac{g_k}{g_0} -i \frac{\bar g_k}{\bar g_0} \right )^2
+ \frac q 2  |g_0|^q \left (-1 + \frac{g_k^2}{g_0^2} +c.c. \right ).
\ee
The expression is the same as in the real case when $k,l\neq 0$
\be
{\mathfrak{H}}^{\mathbb I}_{kl} = \frac{q^2}{2} I^-(k,l,q)- \delta_{kl}I^-(0,0,q), k\geq 1, l\geq 1
\ee
and ${\mathfrak{H}}^{\mathbb I}_{kl} = 0$ if $k=0$ or $l=0$. \\


\subsection{Variations that mix real and imaginary directions}

Now, consider the mixed case, e.g. $(if_k, f_l)$ and we have 
\begin{equation}
\left . \frac{\partial^2 H(f(s))}{\partial  s_1 \partial s_2} \right |_{s_1=s_2=0}= 
\end{equation}
\[
\int \int \frac{q^2}{4} |e^{it\Delta} f_0|^q \left ( i \frac{e^{it\Delta}f_{k}}{ e^{it\Delta}f_0} + c.c.\right )
\left (\frac{ e^{it\Delta}  f_{l}}{ e^{it\Delta} f_0}+c.c. \right )-
\frac q 2  |e^{it\Delta} f_0|^q  \left ( i \frac{ e^{it\Delta}   f_{k} e^{it\Delta}  f_{l}}{e^{it\Delta} f_0  e^{it\Delta} f_0 } + c.c. \right ) dx dt =  
\]
\[
= \frac{q^2}{4} (i I^+ - i \bar I^-  + i I^{-} -i \bar I^+) -\frac q 2 (i I^+ -i \bar I^+) =0,
\]
since $I^-$ is real.


\subsubsection{Structure of the Hessian}
Recall that our goal is to show that the Gaussian critical point is at least a local minimizer.
To this end, we examine various approaches to showing that all the eigenvalues of the Hessian
matrix are negative.   As we have seen in the previous subsections, the structure
of the Hessian is the same in subspaces corresponding to variations in the purely real or purely
imaginary directions, (and the Hessian is zero in directions corresponding to mixed real/imaginary 
variations,)  so we focus just on variations in the purely real subspace.

First note that up to a constant multiplier, the matrix of partial derivatives $I^-(k,l,q)$ can be represented as Gram matrix of linearly independent  functions.

 Indeed, let 
\be
f_k(x,\tau) = c_k H_k(x) e^{i2\pi |k| \tau}=c_{k_1}c_{k_2}\dots c_{k_d} H_{k_1}(x_1) H_{k_2}(x_2) \dots H_{k_d}(x_d) e^{i2\pi (k_1+k_2 + \dots k_d)\tau}
\ee
be defined on $L^2({\mathbb R}^d\times [0,2\pi])$ with the inner product
\be
(f_k,f_l) = \frac{1}{2\pi}\int_{\mathbb R^d} \int_0^{2\pi}  c_k c_l H_k(x) H_l(x) e^{i 2\pi(|k|-|l|)\tau } e^{-q|x|^2/2}dx d\tau.
\ee
Thus, $I^-(k,l,q)$ is proportional to the matrix of inner products of linearly independent functions. 
By the property 
of Gramian matrices, the matrix is positive semi-definite.  Then, the Hessian is the difference of a positive semi-definite matrix and of a matrix proportional to the identity matrix. Therefore, one can conclude that the Hessian is nonpositive if the largest eigenvalue of the  $ I^-(k,l,q)$ is smaller than $(2/q) I^-(0,0,q)$.

For symmetric matrices, the largest eigenvalue is  bounded by the sum of the matrix elements over each column. Then we arrive at the following inequalities which would imply 
nonpositivity of the Hessian.

\begin{equation}
\sum_{|l|=|k|, l\neq 0} I^-(k,l,q)  \leq \frac{2}{q}  I^-(0,0,q)
\end{equation}
or equivalently 
\begin{equation}
\label{eq:comb_ineq}
\sum_{|l|=|k|, l\neq 0}  \prod_{j=1}^d  c_{k_j} c_{l_j}\int  e^{-q x^2/2}
  H_{k_j}(x)  H_{l_j} (x) dx \leq \frac 2 q  c_0^{2d} \left ( \int  e^{-q x^2/2} dx\right )^d,
\end{equation}
where multi-index $k = (k_1, k_2,  ..., k_d)$ is fixed and $k\neq 0$. \\

\subsubsection{Special cases} 
The inequality can be checked for some specific cases, e.g. $d=1$ which implies $q=6$. Then, we have
\be
c_n^2 \int e^{-3x^2} H^2_n(x) dx \leq \frac 2 6 c_0^2  \int e^{-3x^2} dx
\ee
or equivalently
\be
\frac{1}{2^n n!}\int e^{-3x^2} H^2_n(x) dx \leq \frac 1 3  \int e^{-3x^2} dx = \frac{\sqrt{\pi}}{3\sqrt{3}}.
\ee
Rearranging  and rescaling terms, we continue
\be
  \int H_n^2(x)e^{-3x^2} dx \leq  \frac{2^{n} n! \sqrt{\pi}}{3\sqrt{3}},
\ee
\be
 \frac{1}{\sqrt{3}}  \int H_n^2(x/ \sqrt{3})e^{-x^2} dx \leq \frac{2^{n} n! \sqrt{\pi}}{3\sqrt{3}}.
\ee
Using the product formula,
\be
H_n(\gamma x) = \sum_{i=0}^{\lfloor{\frac n 2}\rfloor} \gamma^{n-2i} (\gamma^2-1)^i {n\choose 2i}\frac{(2i)!}{i!} H_{n-2i}(x)
\ee
with $\gamma=1/\sqrt{3}$ we get rid of the integrals in the above inequality.

Square the product formula first
\be
H_n^2( x/\sqrt{3}) = \sum_{i=0}^{\lfloor{\frac n 2}\rfloor} (1/3)^{n-2i} (-2/3)^{2i} {n\choose 2i}^2 \left  (\frac{(2i)!}{i!} \right )^2 H_{n-2i}^2(x) + 
{\rm mixed} \,\, {\rm terms}.
\ee

Multiplying with $e^{-x^2}$ and integrating, so that all mixed terms drop out due to orthonormality, we obtain the inequality:

\be
 \frac{1}{\sqrt{3}}  \int H_n^2(x/ \sqrt{3})e^{-x^2} dx  =  \frac{1}{\sqrt{3}} \sum_{i=0}^{\lfloor{\frac n 2}\rfloor} (1/3)^{n-2i} (-2/3)^{2i} {n\choose 2i}^2 \left  (\frac{(2i)!}{i!} \right )^2 \int H_{n-2i}^2(x) e^{-x^2} dx = \nonumber
\ee

\be
 = \frac{1}{\sqrt{3}} \sum_{i=0}^{\lfloor{\frac n 2}\rfloor} (1/3)^{n-2i} (-2/3)^{2i} {n\choose 2i}^2 \left  (\frac{(2i)!}{i!} \right )^2
  \sqrt{\pi} 2^{n-2i} (n-2i)! \leq  \frac{2^{n} n! \sqrt{\pi}}{  3\sqrt{3}}
\ee
that should hold for all $n\geq 1$.

Taking advantage of a number of cancellations on the left hand-side of this inequality, 
we are reduced to proving  the inequality 
\be
\frac{1}{\sqrt{3}} \sum_{i=0}^{\lfloor{\frac n 2}\rfloor} (1/3)^{n}   \frac{n!}{(n-2i)! i!^2}   \leq  \frac{1}{3\sqrt{3}}
\ee
or equivalently
\begin{equation}\label{eq:Hessest}
\sum_{i=0}^{\lfloor{\frac n 2}\rfloor}   \frac{n!}{(n-2i)! \, i!^2}   \le  3^{n-1}.
\end{equation}

\begin{proposition}
The above inequality holds for any $n\geq 1$.
\end{proposition}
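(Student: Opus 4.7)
The plan is to recognise the left side of \eqref{eq:Hessest} as the central trinomial coefficient and then bound it by a convolution recurrence. Rewriting $\frac{n!}{(n-2i)!\,i!^2}=\binom{n}{i}\binom{n-i}{i}$ and using
\be
(1+x+x^2)^n \;=\; (1+x(1+x))^n \;=\; \sum_{k=0}^n \binom{n}{k}\,x^k(1+x)^k,
\ee
one extracts the coefficient of $x^n$ (which forces $k=n-i$ with $0\le i\le \lfloor n/2\rfloor$) to obtain
\be
T_n \;:=\; [x^n](1+x+x^2)^n \;=\; \sum_{i=0}^{\lfloor n/2\rfloor}\binom{n}{i}\binom{n-i}{i} \;=\; \sum_{i=0}^{\lfloor n/2\rfloor}\frac{n!}{(n-2i)!\,i!^2}.
\ee
Thus \eqref{eq:Hessest} is equivalent to the bound $T_n\le 3^{n-1}$ on the central trinomial coefficient.

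Now let $T_k^{(n)}:=[x^k](1+x+x^2)^n$ denote the full coefficient sequence, which is palindromic, $T_k^{(n)}=T_{2n-k}^{(n)}$, with total mass $\sum_k T_k^{(n)}=3^n$. Multiplying $(1+x+x^2)^{n+1}=(1+x+x^2)\cdot(1+x+x^2)^n$ and reading off the coefficient of $x^{n+1}$ gives
\be
T_{n+1}\;=\;T_{n+1}^{(n)}+T_n^{(n)}+T_{n-1}^{(n)}\;=\;T_n + 2\,T_{n-1}^{(n)},
\ee
where the second equality uses palindromicity $T_{n+1}^{(n)}=T_{n-1}^{(n)}$.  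Provided one can show that $T_{n-1}^{(n)}\le T_n^{(n)}=T_n$ --- equivalently, that the palindromic sequence $(T_k^{(n)})_k$ attains its maximum at the centre $k=n$ --- the recurrence delivers $T_{n+1}\le 3\,T_n$, and the base case $T_1=1=3^0$ together with induction yields $T_n\le 3^{n-1}$ for every $n\ge 1$.

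The only nontrivial step is establishing this centred maximum, and this is the main obstacle of the argument. I would justify it via log-concavity: the base sequence $(1,1,1)$ is log-concave with no internal zeros, and log-concavity with no internal zeros is preserved under convolution (a classical result). Since $(T_k^{(n)})_k$ is the $n$-fold convolution of $(1,1,1)$, it is log-concave, hence unimodal with no internal zeros, and palindromicity then forces its unique maximum at $k=n$. An equivalent probabilistic rephrasing, which sidesteps that preservation theorem, is as follows: if $X_1,\ldots,X_n$ are i.i.d.\ uniform on $\{-1,0,1\}$ and $S_n=\sum_{i=1}^n X_i$, then $T_n/3^n=P(S_n=0)$; a direct induction shows $S_n$ has a symmetric, unimodal distribution peaked at $0$, so
\be
P(S_{n+1}=0)\;=\;\tfrac{1}{3}\bigl[P(S_n=-1)+P(S_n=0)+P(S_n=1)\bigr]\;\le\; P(S_n=0),
\ee
and hence $T_n/3^n$ is nonincreasing in $n$ and never exceeds $T_1/3=1/3$, giving $T_n\le 3^{n-1}$ as required (with equality precisely at $n=1,2$).
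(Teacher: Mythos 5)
Your proof is correct, but it takes a genuinely different route from the paper's. The paper also starts from the trinomial identity $3^n=\sum_{k_1+k_2+k_3=n}\frac{n!}{k_1!k_2!k_3!}$, but then argues by direct term-matching: it observes that $3\sum_i \frac{n!}{(n-2i)!\,i!^2}$ is exactly the sub-sum of multinomial terms with two equal indices, so the inequality is immediate when $3\nmid n$, and when $n=3m$ the triply-counted central term $\frac{n!}{m!^3}$ is absorbed by the six permutations of $\frac{n!}{(m+1)!(m-1)!m!}$ (which reduces to $m+1\le 3(m-1)$, leaving only $n=3$ to check by hand). You instead identify the left side as the central trinomial coefficient $T_n=[x^n](1+x+x^2)^n$, i.e.\ $3^nP(S_n=0)$ for a lazy random walk, and deduce $T_n\le 3^{n-1}$ from the monotonicity of the return probability, which in turn rests on unimodality of the coefficient sequence. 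Your one potentially soft spot --- the claim that ``a direct induction shows'' unimodality --- does in fact close cleanly: if $(p_k)$ is symmetric and unimodal with peak at $0$ and $q_k=\tfrac13(p_{k-1}+p_k+p_{k+1})$, then $q_k-q_{k+1}=\tfrac13(p_{k-1}-p_{k+2})\ge 0$ for all $k\ge 0$, so no appeal to the general log-concavity-preservation theorem is needed. The trade-off: the paper's argument is entirely self-contained and elementary but requires a small case analysis and an ad hoc absorption step; yours is more conceptual, avoids the case split, yields the stronger statement that $T_n/3^n$ is nonincreasing, and makes the equality cases $n=1,2$ (the two zero eigenvalues of the Hessian) appear naturally rather than as checked base cases.
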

\begin{proof}

 Note that one can  easily check ``by hand'' 
that the two sides of \eqref{eq:Hessest} are equal for
$n=1,2$.   These correspond to the two zero eigenvalues of the Hessian evaluated at the
Gaussian discussed in Subsection \ref{s:Gaussian}. Thus, we can restrict consideration to $n \ge 3$.
First, recall multinomial formula
\be
3^n = (1+1+1)^n = \sum_{k_1+k_2+k_3=n}\frac{n!}{k_1! k_2! k_3!}
\ee
that will be used to prove the inequality. Assume first that $n$ is not a multiple of 3. Then 
\[
3 \sum_{i=0}^{\lfloor{\frac n 2}\rfloor}   \frac{n!}{(n-2i)! \, i!^2}  
\]
is just a part of the triple sum in the trinomial formula (since $n\neq 2i$ for any $i$). 

On the other hand, if $n$ is divisible by 3, and we apply the same argument then all terms can be 
matched with the corresponding ones in the trinomial formula except for 
\[
\frac{n!}{m!m!m!}
\]
that is multiplied by 3
 in the last sum but appears only once in the trinomial formula. 

Therefore, to prove the inequality, we need to bound two of these terms with some other terms in the trinomial formula, which are not matched yet with anything else.

Such terms are readily provided by 
\[
\frac{n!}{(m+1)! (m-1)! m!}
\] 
and there are 6 of them as all 3 components can be permuted. Thus, it suffices to verify
\be
2\frac{n!}{m!m!m!} \leq 6\frac{n!}{(m+1)! (m-1)! m!},
\ee
which is equivalent to $(m+1)\leq 3(m-1)$ implying the result if $m\geq 2$ or equivalently  for $n\geq 6$. This leaves only one case to consider $n=3$ which can be verified by direct
calculation.

\end{proof}

\begin{rmk}
One can derive similar combinatorial expressions in higher dimensions. They inequalities  appear to hold, too, but they are naturally  more difficult to prove.
\end{rmk}

\begin{figure}[ht]
\includegraphics[clip, trim= 2cm 6cm 2cm 6cm, height=100mm]{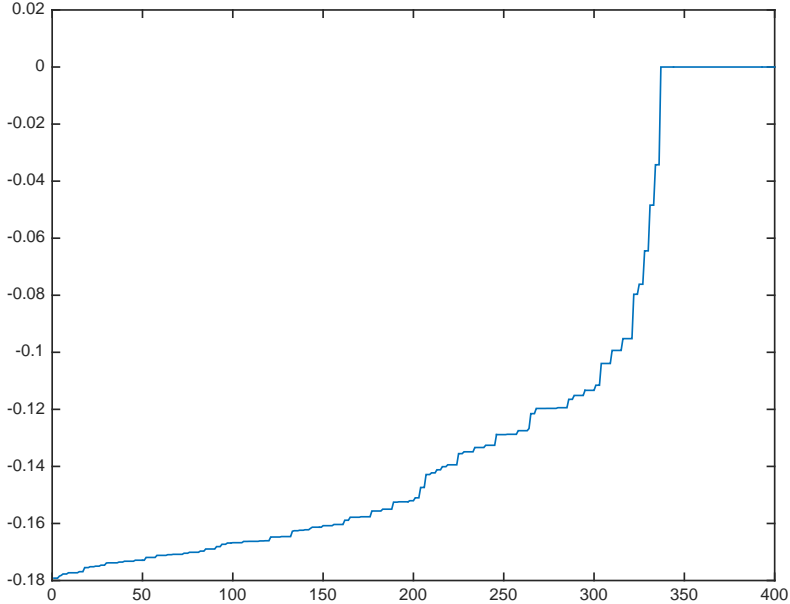}
\caption{Spectral gap in 3d case. This data was obtained  by evaluating \eqref{eq:comb_ineq} with $d=3$. There are several zero eigenvalues and all the remaining eigenvalues 
are negative separated by the gap about  0.03.}
\end{figure}

\section{Numerical Calculation of Hessian}
In this section we describe the details of our algorithm  that was used to compute the spectrum of the Hessian.

Note that in dimension 1, the Hessian becomes diagonal with only positive terms. The case of dimension 2 is already nontrivial numerically, but we already know from the previous work that Hessian is nonpositive.   In higher dimensions, the Hessian is a sparse matrix with some nonzero terms off diagonal. 

First introduce  normalization constants
\be
c_n^2 = \frac{1}{\sqrt{\pi}2^n n!} .
\ee
Now, introduce and compute the following integrals used to find components of the Hessian
\be
G(m,n) = c_m c_n \int e^{-q x^2/2} H_m(x)H_n(x)dx.
\ee
 We fix a large integer $N$, and compute $G(m,n)$ for all modes with $m$, $n$, less than or equal to
$N$.  We then use the Gramian structure of the Hessian to compute 
 the matrix of partial derivatives (Hessian) with $d$ being fixed and  
$q= 2+ 4/d$ 
\be
M(i,j) = G(k_1,l_1)\dots G(k_d,l_d)\cdot \chi_{|k|-|l|}.
\ee
We use the indicator function to avoid computing zero components where  $\chi_m=0$ if $m=0$ and $\chi_m=1$ if $m \neq 0$. We need to parametrize the values of $i,j$ to obtain a matrix and we do this  using a base $d$ expansion 
\be
i = k_1(N+1)^{d-1} + k_2(N+1)^{d-2} + \dots + k_d
\ee
\[
j = l_1(N+1)^{d-1} + l_2(N+1)^{d-2} + \dots + l_d.
\]

Next, using the calculations from the previous section about the structure
of the Hessian,  we subtract a diagonal matrix which is the identity matrix times the constant
\be
c = \frac{2}{q} (G(0,0))^d,  
\ee
so that the final expression for the Hessian components  is given by
\be
M(i,j)-\frac{2}{q} (G(0,0))^d  \delta_{ij}.
\ee

\section*{Acknowledgements} VZ thanks Simons foundation for partial support (\#278840 to Vadim Zharnitsky). The work of CEW was supported in part by the NSF through grant DMS-1311553.


\end{document}